\documentclass[11pt]{amsart}

\usepackage[margin=1in]{geometry}
\usepackage{amsmath,amssymb,amsfonts,mathtools,bm,mathrsfs}
\usepackage{booktabs,array,longtable,tabularx,ragged2e,microtype}
\usepackage[
  backend=biber,
  style=numeric-comp,
  sorting=none,
  natbib=true,
  maxbibnames=99,
  doi=true,
  url=false,
  isbn=false
]{biblatex}
\usepackage[colorlinks=true,allcolors=blue]{hyperref}
\usepackage[nameinlink,noabbrev]{cleveref}
\usepackage{etoolbox}

\makeatletter
\newcommand{\authorinfo}[1]{\gdef\@authorinfo{#1}}
\let\@authorinfo\@empty
\patchcmd{\@setauthors}
  {\MakeUppercase{\authors}}
  {\MakeUppercase{\authors}%
   \ifx\@authorinfo\@empty\else
     \par\smallskip\normalfont\@authorinfo
   \fi}
  {}
  {\PackageWarning{main}{Could not patch amsart author block}}
\makeatother
\hypersetup{
  pdftitle={Projection, Memory, and the Validity of Scalar Early-Warning Signals: A Schur--Volterra Theory with an AMOC Application},
  pdfauthor={Mauricio Herrera-Marin},
  pdfsubject={Projection, Volterra memory, Schur-resolvent stability, and scalar early-warning signals},
  pdfkeywords={Mori--Zwanzig projection, Volterra memory, Schur complement, early-warning signals, nonnormality, AMOC}
}
\usepackage{enumitem}
\usepackage{graphicx}
\usepackage{subcaption}
\usepackage{xcolor}
\usepackage{tikz}
\usetikzlibrary{arrows.meta,positioning,fit,calc}

\newtheorem{theorem}{Theorem}[section]
\newtheorem{proposition}[theorem]{Proposition}
\newtheorem{lemma}[theorem]{Lemma}
\newtheorem{corollary}[theorem]{Corollary}
\theoremstyle{definition}
\newtheorem{definition}[theorem]{Definition}
\newtheorem{assumption}[theorem]{Assumption}

\theoremstyle{remark}
\newtheorem{remark}[theorem]{Remark}

\DeclareMathOperator{\spec}{spec}

\DeclareMathOperator{\diag}{diag}

\DeclareMathOperator{\Var}{Var}
\DeclareMathOperator{\Corr}{Corr}

\newcommand{\HH}{\mathcal H}
\newcommand{\HP}{\mathcal H_P}
\newcommand{\HQ}{\mathcal H_Q}

\newcommand{\II}{I}
\newcommand{\RR}{\mathbb R}

\newcommand{\EE}{\mathbb E}

\newcommand{\norm}[1]{\left\lVert #1\right\rVert}
\newcommand{\abs}[1]{\left\lvert #1\right\rvert}

\newcommand{\eps}{\varepsilon}
\newcommand{\FovS}{F_{\mathrm{ovS}}}
\newcommand{\dd}{\,\mathrm d}

\title[Projection, memory, and scalar EWS]
{Projection, Memory, and the Validity of Scalar Early-Warning Signals:\\
A Schur--Volterra Theory with an AMOC Application}

\author{Mauricio Herrera-Mar\'in}
\authorinfo{%
  Faculty of Engineering, Universidad del Desarrollo, Santiago, Chile\\
  \href{mailto:mherrera@udd.cl}{mherrera@udd.cl}\quad
  \href{https://orcid.org/0000-0002-9604-3077}{ORCID: 0000-0002-9604-3077}%
}

\subjclass[2020]{34K30, 37M10, 60H10, 47A10, 47A56, 86A10}
\keywords{Mori--Zwanzig projection, Volterra memory, Schur complement, early-warning signals, critical slowing down, nonnormality, partial observation, AMOC}

\begin{document}

\begin{abstract}
Early-warning signals are usually computed from one or a few observables, although the underlying dynamics are high dimensional.  Projection then creates a structural ambiguity: the same scalar record may reflect spectral loss of stability, hidden-state memory, nonnormal amplification, a turning point of a physical balance, or some combination of these mechanisms.  We develop a unified operator theory that makes these alternatives explicit.

For a resolved--hidden block linearization we derive exact continuous and discrete Volterra reductions, finite and diffusive Markovian liftings, and a Schur--resolvent factorization of the full stability problem.  We distinguish three non-equivalent notions of memory: realized pathwise memory, structural Schur return, and finite-horizon projected memory.  A covariance theorem gives sufficient conditions under which a scalar observable genuinely inherits critical slowing down from a simple real multiplier approaching the unit circle.  The theorem isolates the required ingredients: spectral separation, bounded eigenvector conditioning, observability of the critical mode, stochastic excitation of that mode, and quasi-stationary forcing.  We then prove failure mechanisms.  A critical mode can be invisible or unexcited; an oscillatory crossing need not produce lag-one correlation approaching one; a nonnormal family can generate diverging scalar variance with a fixed stable spectrum; and a turning point of a physics-based scalar observable need not coincide with a spectral boundary.  Finite-horizon observations also fail to identify long memory uniquely.

The theory is illustrated with a $4400$-year Community Earth System Model experiment tracing an Atlantic Meridional Overturning Circulation hysteresis loop.  The first three vertical EOFs represent amplitude, vertical displacement, and shape deformation.  EOF1 alone detects altered susceptibility, but EOF1--EOF2 is the smallest resolved space that robustly reveals a changed Schur return and projected memory after transport recovery.  The physics-based freshwater-transport indicator $\FovS$ reaches its spline-estimated minimum in model year $1732$, $26$ years before collapse, while the modal stability margin does not decrease consistently toward zero.  Thus the scalar balance indicator and the projected operator diagnose complementary mechanisms rather than a common universal threshold.  The application is mechanistic and single-trajectory; no fractional AMOC law or validated operatorial early-warning signal is claimed.
\end{abstract}

\maketitle

\section{Introduction}

A scalar early-warning signal is a statement about a projection.  The climate system, an ecosystem, a fluid flow, a molecular assembly, or a network may evolve in a state space of very large dimension, yet the practical diagnostic is often built from a single measured quantity.  Increasing variance and lag-one autocorrelation are then interpreted through critical slowing down: a dominant restoring rate approaches zero, or a discrete multiplier approaches the unit circle, so perturbations decay more slowly and stationary fluctuations increase \citep{Scheffer2009,Dakos2012,Lenton2012,Kuehn2011,Kuehn2015}.  The interpretation is mathematically sound in the regime for which it was derived.  It is not automatically inherited by an arbitrary observable of a partially observed, multiscale, nonnormal, or nonstationary system.

Projection is the central difficulty.  The Mori--Zwanzig identity shows that eliminating unresolved degrees of freedom replaces a Markovian full-state evolution by a reduced equation containing an instantaneous term, a memory term, and orthogonal dynamics \citep{Mori1965,Zwanzig1973,ChorinHald2013,ChorinStinis2006}.  The reduced observable is therefore generally not a state.  Its variance and autocorrelation may respond to hidden relaxation, colored forcing, unresolved initial conditions, or nonnormal input--response geometry.  Data-driven closure methods explicitly estimate these memory corrections \citep{Kondrashov2015,Gouasmi2017,ParishDuraisamy2017,PanDuraisamy2018,LinLu2021,LinEtAl2022,LinEtAl2023,Gupta2025}, while Volterra theory and realization theory provide the corresponding analytical language \citep{Gripenberg1990,Pruss1993,Schilling2012}.

A second ambiguity is that early-warning-like statistics do not uniquely identify spectral criticality.  Correlated and nonstationary noise can alter rolling indicators \citep{DitlevsenJohnsen2010,BoettnerBoers2022}; oscillatory modes can obscure scalar AR(1) signatures; and nonnormal dynamics can produce strong transient or noise-driven amplification with a stable spectrum \citep{TrefethenEmbree2005,TroudeSornette2025,TroudeEtAl2026}.  Multivariate local models can recover information unavailable to one-dimensional AR(1) diagnostics, especially near oscillatory or multiscale transitions \citep{Hobden2026}.  Event-time hazard decompositions provide another example: history-dependent clustering can be informative even when it is not a bifurcation EWS \citep{HerreraMarin2026Drought}.

The purpose of this paper is not to reject scalar indicators.  It is to answer a sharper question:
\begin{quote}
\emph{Under what operator-theoretic conditions does a scalar early-warning signal survive projection, and what distinct mechanisms can imitate or invalidate that interpretation?}
\end{quote}

The answer requires connecting objects that are often treated separately:
\[
\begin{aligned}
\text{projection}
&\longrightarrow \text{Volterra memory}
\longrightarrow \text{Markovian lifting}\\
&\longrightarrow \text{Schur--resolvent stability}
\longrightarrow \text{nonnormal transfer}
\longrightarrow \text{scalar statistics}.
\end{aligned}
\]
The resulting framework separates a physics-based turning point from a spectral boundary, and separates asymptotic stability from finite-horizon receptivity.

\subsection{Main contributions}

The main results are the following.
\begin{enumerate}[label=(\roman*),leftmargin=0.8cm]
\item We derive exact continuous and discrete Volterra equations generated by elimination of a hidden block, including the unresolved-initial-condition and hidden-forcing terms that are frequently merged into an undifferentiated ``memory'' residual.
\item We give exact finite Markovian realizations for rational kernels and diffusive realizations for completely monotone kernels.  Fractional and tempered-fractional kernels appear as structured special cases, not as generic consequences of projection.
\item A Schur--resolvent factorization yields an exact full-state stability criterion and identifies the memory-renormalized stationary boundary
\[
\II-A-B(\II-D)^{-1}C.
\]
\item We define structural memory, finite-horizon projected memory, hidden-to-observed transfer, and the resolved Schur resolvent.  Their values and invariance properties are made explicit.
\item We prove a scalar critical-slowing-down theorem.  If a simple real multiplier $r\uparrow1$ is spectrally isolated, uniformly conditioned, observed by the scalar, and excited by the noise, then
\[
\Var(y)\sim \frac{C_y}{1-r^2},
\qquad
\Corr(y_{n+1},y_n)\to1.
\]
\item We prove counterexamples showing that each assumption matters: invisible or unexcited critical modes, oscillatory crossings, nonnormal variance inflation at fixed spectral radius, and scalar turning points without spectral criticality.
\item We show that any finite memory horizon is non-identifying: two stable hidden realizations can agree through an arbitrary number of kernel coefficients and differ only later.
\item We use a fully audited AMOC hysteresis calculation to distinguish scalar transport, vertical memory, nonnormal hidden-to-observed transfer, and a physics-based freshwater-balance indicator.
\end{enumerate}

\subsection{Scope}

The mathematical theory is linear because the local relation among projection, memory, Schur complements, covariance, and nonnormality is already nontrivial and can be stated exactly.  For a nonlinear slowly forced system, the matrices below are interpreted as local linearizations or local regression operators.  The AMOC application does not establish an observational collapse forecast, a fractional climate law, or a universal operatorial EWS.  Its purpose is to show that the distinctions in the theory occur in a state-of-the-art climate-model trajectory.

\begin{figure}[t]
\centering
\resizebox{\textwidth}{!}{%
\begin{tikzpicture}[
  node distance=12mm and 18mm,
  box/.style={draw,rounded corners,align=center,minimum height=9mm,minimum width=27mm,fill=blue!4},
  arr/.style={-{Latex[length=2mm]},thick},
  note/.style={align=center,font=\small}
]
\node[box] (full) {full Markov state\\$x=(p,q)$};
\node[box,right=of full] (proj) {resolved observable\\$p=Px$};
\node[box,right=of proj] (vol) {Volterra reduction\\$A+\{K_j\}$};
\node[box,right=of vol] (stat) {scalar statistic\\variance, AC(1), turn};
\draw[arr] (full) -- node[above,font=\small]{projection} (proj);
\draw[arr] (proj) -- node[above,font=\small]{elimination} (vol);
\draw[arr] (vol) -- node[above,font=\small]{observation} (stat);
\node[box,below=of vol] (lift) {Markovian lift\\hidden realization};
\node[box,below=of stat] (schur) {Schur resolvent\\and finite-time gain};
\draw[arr] (vol) -- node[right,font=\small]{realization} (lift);
\draw[arr] (lift) -- (schur);
\draw[arr] (schur) -- node[right,font=\small]{sufficient conditions} (stat);
\end{tikzpicture}%
}
\par\smallskip
{\small\centering Projection loses information; lifting restores Markovianity for the selected kernel, not the original hidden state.\par}
\caption{Logical structure.  Scalar early-warning interpretation is valid only after the projected memory, hidden-state observability, spectral geometry, noise excitation, and forcing time scale have been controlled.}
\label{fig:schematic}
\end{figure}

\section{Projection produces an exact Volterra equation}
\label{sec:projection}

Let
\[
\HH=\HP\oplus\HQ
\]
be a finite-dimensional real or complex Hilbert space, with orthogonal projectors $P$ and $Q=\II-P$.  Write $x=(p,q)$ with $p\in\HP$ resolved and $q\in\HQ$ hidden.

\subsection{Continuous time}

Consider
\begin{equation}
\label{eq:continuous-full}
\dot x(t)=Lx(t)+f(t),
\qquad
L=\begin{pmatrix}A&B\\ C&D\end{pmatrix},
\qquad
f=\begin{pmatrix}f_P\\ f_Q\end{pmatrix}.
\end{equation}

\begin{theorem}[Exact continuous elimination]
\label{thm:continuous-elimination}
Assume that $D$ generates a strongly continuous semigroup on $\HQ$.  Every mild solution of \eqref{eq:continuous-full} satisfies
\begin{align}
\dot p(t)
={}&Ap(t)+\int_0^tK(t-s)p(s)\dd s+\eta(t),
\label{eq:continuous-reduced}\\
K(t)={}&Be^{tD}C,
\label{eq:continuous-kernel}\\
\eta(t)={}&f_P(t)+Be^{tD}q_0+
\int_0^tBe^{(t-s)D}f_Q(s)\dd s.
\label{eq:continuous-orthogonal}
\end{align}
Conversely, if $p$ solves \eqref{eq:continuous-reduced} and
\[
q(t)=e^{tD}q_0+
\int_0^te^{(t-s)D}\bigl(Cp(s)+f_Q(s)\bigr)\dd s,
\]
then $(p,q)$ solves \eqref{eq:continuous-full}.
\end{theorem}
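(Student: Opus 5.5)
The plan is to use variation of constants in the hidden block, substitute the result into the resolved equation, and then reverse the computation for the converse.

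\textbf{Forward direction.} The hidden component of \eqref{eq:continuous-full} reads $\dot q=Dq+(Cp+f_Q)$. I treat $g:=Cp+f_Q$ as an inhomogeneity. Since $\HH$ is finite-dimensional, $C$ is bounded, and $p$ is continuous for a mild solution. Because $D$ generates a strongly continuous semigroup, the mild solution of this inhomogeneous Cauchy problem is unique and given by Duhamel's formula:
\[
q(t)=e^{tD}q_0+\int_0^te^{(t-s)D}\bigl(Cp(s)+f_Q(s)\bigr)\dd s .
\]
To justify this, I note that a mild solution of the full system has a $Q$-component satisfying exactly this integral identity. This follows by writing the mild formulation of the block system and comparing components. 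Alternatively, in finite dimensions every mild solution is an absolutely continuous classical solution, and differentiating $s\mapsto e^{(t-s)D}q(s)$ gives the formula directly. I then substitute into the $P$-component, $\dot p=Ap+Bq+f_P$. The operator $B$ is bounded, so it passes inside the integral. This yields $Ap(t)+\int_0^tBe^{(t-s)D}Cp(s)\dd s$ together with the three remaining terms, which are $f_P(t)$, $Be^{tD}q_0$, and $\int_0^t Be^{(t-s)D}f_Q(s)\dd s$. These are precisely $K$ and $\eta$ from \eqref{eq:continuous-kernel}--\eqref{eq:continuous-orthogonal}. Equation \eqref{eq:continuous-reduced} holds almost everywhere, or in integrated form $p(t)=p_0+\int_0^t(\cdots)$ if $f$ is merely locally integrable.

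\textbf{Converse.} Given a solution $p$ of \eqref{eq:continuous-reduced}, I define $q$ by the stated Duhamel formula. By the same variation-of-constants fact, $q$ is the mild solution of $\dot q=Dq+Cp+f_Q$, $q(0)=q_0$. Running the substitution above backwards, $Bq(t)$ equals $\int_0^tK(t-s)p(s)\dd s+\eta(t)-f_P(t)$, so \eqref{eq:continuous-reduced} becomes $\dot p=Ap+Bq+f_P$. Hence $(p,q)$ satisfies both block rows. Since the perturbation of $\mathrm{diag}(0,D)$ by the bounded off-diagonal and $A$ blocks still generates a semigroup, I will check that the mild formulation for $L$ holds. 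Here I use the standard equivalence that $x$ is a mild solution for $L=L_0+M$ with $M$ bounded iff $x=e^{tL_0}x_0+\int_0^t e^{(t-s)L_0}(Mx+f)\dd s$. The natural choice is $L_0=\mathrm{diag}(0,D)$, with $M$ containing $A$, $B$ and $C$, where I absorb $A$ into $M$ using the integral form for $p$.

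\textbf{Main obstacle.} The only delicate point is the notion of mild solution and the interchange of the bounded operators $B$, $C$ with the convolution integrals. The algebra itself is routine. In the finite-dimensional setting stated, this reduces to the equivalence between mild and absolutely continuous solutions. I would handle it via the bounded-perturbation characterization of mild solutions noted in the converse, which also covers a possible infinite-dimensional $\HQ$.
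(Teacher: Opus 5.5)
Your proposal is correct and follows the paper's proof: variation of constants in the hidden block, substitution into $\dot p=Ap+Bq+f_P$, and the same substitution run backwards for the converse. Your extra care about mild solutions fills in details the paper leaves implicit; you handle this either via absolute continuity in finite dimensions or via the bounded-perturbation characterization with $L_0=\diag(0,D)$. One correction: ``comparing components'' of the mild formula for $e^{tL}$ does not by itself yield the Duhamel formula, because the $QQ$-block of $e^{tL}$ is not $e^{tD}$, so that step genuinely needs one of those two routes.
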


\begin{proof}
Variation of constants in the hidden equation gives the displayed expression for $q$.  Substitution into $\dot p=Ap+Bq+f_P$ yields \eqref{eq:continuous-reduced}--\eqref{eq:continuous-orthogonal}; the converse is direct.
\end{proof}

The kernel $K(t)$ is structural.  The term $Be^{tD}q_0$ contains the unresolved initial condition, while the last convolution in \eqref{eq:continuous-orthogonal} transports hidden forcing.  These terms are not interchangeable.

\subsection{Discrete time}

The sampled or intrinsically discrete system is
\begin{equation}
\label{eq:discrete-full}
x_{n+1}=Jx_n+u_n,
\qquad
J=\begin{pmatrix}A&B\\ C&D\end{pmatrix}.
\end{equation}

\begin{theorem}[Exact discrete elimination]
\label{thm:discrete-elimination}
For every $n\ge0$,
\begin{align}
q_n={}&D^nq_0+
\sum_{j=0}^{n-1}D^{n-1-j}\bigl(Cp_j+u_{Q,j}\bigr),
\label{eq:q-elim}\\
p_{n+1}={}&Ap_n+
\sum_{\ell=1}^{n}K_\ell p_{n-\ell}+\eta_n,
\label{eq:discrete-volterra}
\end{align}
where
\begin{equation}
\label{eq:discrete-kernel}
K_\ell=BD^{\ell-1}C,
\qquad \ell\ge1,
\end{equation}
and
\[
\eta_n=u_{P,n}+BD^nq_0+
\sum_{j=0}^{n-1}BD^{n-1-j}u_{Q,j}.
\]
Thus every finite-dimensional hidden block generates an exact matrix-valued Volterra equation.
\end{theorem}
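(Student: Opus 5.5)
The plan is to prove \eqref{eq:q-elim} by induction on $n$ and then substitute it into the resolved equation, exactly mirroring the variation-of-constants argument in Theorem~\ref{thm:continuous-elimination} with sums in place of integrals. Write the blocks of \eqref{eq:discrete-full} as
\[
p_{n+1}=Ap_n+Bq_n+u_{P,n},
\qquad
q_{n+1}=Cp_n+Dq_n+u_{Q,j}\big|_{j=n}.
\]
For $n=0$ the sum in \eqref{eq:q-elim} is empty and the formula reads $q_0=q_0$. Assume it holds at $n$. Apply the hidden recursion to obtain
\[
q_{n+1}=D^{n+1}q_0+\sum_{j=0}^{n-1}D^{n-j}\bigl(Cp_j+u_{Q,j}\bigr)+\bigl(Cp_n+u_{Q,n}\bigr).
\]
The last term is the $j=n$ summand, since $D^{0}=\II$. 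This gives \eqref{eq:q-elim} at $n+1$. No invertibility or stability of $D$ is needed, because only nonnegative powers of $D$ appear.

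Next, insert \eqref{eq:q-elim} into $p_{n+1}=Ap_n+Bq_n+u_{P,n}$. Split $Bq_n$ into three pieces. The first, $BD^nq_0$, is the unresolved-initial-condition term. The second is the hidden-forcing sum $\sum_{j=0}^{n-1}BD^{n-1-j}u_{Q,j}$. The third is the structural memory sum $\sum_{j=0}^{n-1}BD^{n-1-j}Cp_j$. In the third piece, reindex with $\ell=n-j$, which runs over $1,\dots,n$ as $j$ runs over $n-1,\dots,0$. The sum then becomes $\sum_{\ell=1}^{n}BD^{\ell-1}Cp_{n-\ell}=\sum_{\ell=1}^nK_\ell p_{n-\ell}$, with $K_\ell$ as in \eqref{eq:discrete-kernel}. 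Collecting $u_{P,n}$ together with the first two pieces gives exactly $\eta_n$, which establishes \eqref{eq:discrete-volterra}.

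There is no real obstacle. The only step needing care is the index bookkeeping in the reindexing $\ell=n-j$, in particular checking that the $\ell=1$ term is $BCp_{n-1}$, with no memory contribution at lag zero. Lag zero is carried entirely by $A$, and the empty-sum conventions at $n=0$ give $p_1=Ap_0+Bq_0+u_{P,0}$, which agrees with the full system.

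For the final sentence of the statement, observe that the argument uses only the finite matrices $A,B,C,D$. Hence any finite-dimensional hidden block produces the matrix-valued kernel $\{K_\ell\}_{\ell\ge1}$ and the equation \eqref{eq:discrete-volterra} exactly, with no truncation.
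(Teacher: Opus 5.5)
Your proof is correct and follows the paper's own route: induction on $n$ for the hidden variation-of-constants formula, then substitution into $p_{n+1}=Ap_n+Bq_n+u_{P,n}$ with the reindexing $\ell=n-j$. Your extra bookkeeping (the empty sums at $n=0$, the identification $K_1=BC$, and the absence of lag-zero memory) makes explicit what the paper leaves to the reader.
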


\begin{proof}
Equation \eqref{eq:q-elim} follows by induction.  Insert it into $p_{n+1}=Ap_n+Bq_n+u_{P,n}$ and reindex.
\end{proof}

\subsection{Transfer function and three notions of memory}

For $z\notin\spec(D)$ define
\begin{equation}
\label{eq:transfer}
\widehat K(z)=B(z\II-D)^{-1}C.
\end{equation}
The coefficients in the expansion at infinity are exactly the memory kernels:
\[
\widehat K(z)=\sum_{\ell\ge1}z^{-\ell}K_\ell.
\]
This suggests three distinct notions.

\begin{definition}[Realized, structural, and dynamical memory]
\label{def:memory-types}
\begin{enumerate}[label=(\alph*)]
\item A \emph{realized memory contribution} is the path-dependent value
\[
\mathfrak m_n^{\mathrm{real}}
=\sum_{\ell=1}^{n}K_\ell p_{n-\ell}.
\]
\item If $\rho(D)<1$, the \emph{structural static return} is
\begin{equation}
\label{eq:static-return}
\mathcal N_0=B(\II-D)^{-1}C=\sum_{\ell\ge1}K_\ell.
\end{equation}
\item The \emph{finite-horizon projected-memory correction} is
\begin{equation}
\label{eq:dynamic-memory}
\mathcal M_h=PJ^hP-A^h.
\end{equation}
\end{enumerate}
\end{definition}

The first object depends on the realized history, the second on the hidden realization, and the third on excursions that leave and return to the resolved subspace within $h$ steps.  They should not be identified with one another.  A positive scalar Hawkes decomposition is a useful special case: the history term $\alpha\sum_{k\ge1}g_ke_{n-k}$ is realized Volterra memory, while the kernel and its lift encode structural memory \citep{HerreraMarin2026Drought}.

\section{Markovian lifting and fractional memory}
\label{sec:lifting}

Projection produces memory.  Lifting performs the reverse operation only for a selected reduced kernel: it restores Markovianity by adding auxiliary variables, but it does not reconstruct the original hidden state.

\subsection{Finite rational realizations}

\begin{theorem}[Exact finite lifting]
\label{thm:finite-lift}
Suppose
\[
K(t)=\sum_{m=1}^MB_me^{tD_m}C_m.
\]
Define
\[
z_m(t)=\int_0^te^{(t-s)D_m}C_mp(s)\dd s.
\]
Then the Volterra equation
\[
\dot p=Ap+\int_0^tK(t-s)p(s)\dd s+f
\]
is equivalent in its resolved component to
\begin{equation}
\label{eq:finite-lift}
\dot p=Ap+\sum_{m=1}^MB_mz_m+f,
\qquad
\dot z_m=D_mz_m+C_mp,
\qquad z_m(0)=0.
\end{equation}
\end{theorem}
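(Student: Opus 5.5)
The plan is to verify both directions of the equivalence by direct computation with the variation-of-constants formula, in the spirit of the proof of \cref{thm:continuous-elimination}, working in the finite-dimensional setting. Here each $e^{tD_m}$ is a matrix exponential, hence smooth in $t$. The one regularity choice to fix at the outset is that $p$ is continuous, or at least locally integrable, so that all convolutions are well defined and $z_m$ is absolutely continuous.

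For the forward direction, let $p$ solve the Volterra equation and define $z_m$ as in the statement. Differentiating under the integral (Leibniz rule) gives
\[
\dot z_m(t)=C_mp(t)+\int_0^tD_me^{(t-s)D_m}C_mp(s)\dd s=D_mz_m(t)+C_mp(t),
\]
and clearly $z_m(0)=0$. Next, by linearity of the integral,
\[
\sum_{m}B_mz_m(t)=\int_0^t\sum_mB_me^{(t-s)D_m}C_mp(s)\dd s=\int_0^tK(t-s)p(s)\dd s.
\]
Substituting this into the Volterra equation gives the first equation of \eqref{eq:finite-lift}. If $p$ is only locally integrable, the derivative identity is instead read almost everywhere, or in integrated form $z_m(t)=\int_0^t\bigl(D_mz_m+C_mp\bigr)\dd s$, which follows from Fubini.

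For the converse, let $(p,z_1,\dots,z_M)$ solve \eqref{eq:finite-lift} with $z_m(0)=0$. Each $z_m$ solves a linear inhomogeneous ODE with forcing $C_mp$, so the uniqueness of solutions forces $z_m$ to equal the variation-of-constants expression. Concretely, one differentiates $e^{-tD_m}z_m(t)$ to get $e^{-tD_m}C_mp(t)$ and integrates from $0$. Substituting this expression for $z_m$ back into $\dot p$ recovers the Volterra equation by the same summation identity as above.

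It is worth noting, as the statement's phrase ``in its resolved component'' indicates, that the lifted variables $z_m$ need not coincide with any original hidden state. They are merely one realization of the kernel. This is exactly the block structure of \cref{thm:continuous-elimination}, applied with $D=\diag(D_1,\dots,D_M)$, $B=(B_1,\dots,B_M)$, $C=(C_1;\dots;C_M)$, $q_0=0$ and $f_Q=0$. So an alternative one-line route is to invoke that theorem with these block matrices, since $Be^{tD}C=\sum_mB_me^{tD_m}C_m=K(t)$. I would present the direct argument and mention this reduction.

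The only point requiring any care is the regularity needed to differentiate under the integral sign and to apply uniqueness for the $z_m$ equation. I expect no real obstacle, because everything is finite-dimensional and linear with bounded generators.
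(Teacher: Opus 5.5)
Your proposal is correct and follows the same route as the paper: differentiate the definition of $z_m$ to obtain $\dot z_m=D_mz_m+C_mp$ with $z_m(0)=0$, and use the identity $\sum_mB_mz_m=\int_0^tK(t-s)p(s)\dd s$. Your explicit converse, via uniqueness for the linear $z_m$ equation, and the block-diagonal reduction to \cref{thm:continuous-elimination} with $q_0=0$ and $f_Q=0$ spell out what the paper's one-line proof leaves implicit.
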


\begin{proof}
Differentiate the definition of $z_m$ and use $\sum_mB_mz_m=(K*p)(t)$.
\end{proof}

The discrete system \eqref{eq:discrete-full} is already such a realization.  Hidden similarity transformations
\[
(B,C,D)\mapsto(BT^{-1},TC,TDT^{-1})
\]
leave every coefficient $BD^{\ell-1}C$ unchanged.  Hence the input--output memory does not identify hidden coordinates uniquely.

\subsection{Completely monotone and fractional kernels}

Let $k:(0,\infty)\to[0,\infty)$ be completely monotone and locally integrable.  Bernstein's theorem yields a positive measure $\mu$ such that
\begin{equation}
\label{eq:bernstein}
k(t)=\int_0^\infty e^{-\lambda t}\,\mu(\dd\lambda).
\end{equation}

\begin{theorem}[Diffusive lifting]
\label{thm:diffusive-lift}
For $K(t)=Bk(t)C$, assume the displayed Bochner integrals are finite.  The continuum system
\begin{align}
\dot p(t)&=Ap(t)+B\int_0^\infty z(\lambda,t)\mu(\dd\lambda)+f(t),\\
\partial_tz(\lambda,t)&=-\lambda z(\lambda,t)+Cp(t),
\qquad z(\lambda,0)=0,
\end{align}
is an exact Markovian realization of
\[
\dot p=Ap+B\int_0^tk(t-s)Cp(s)\dd s+f.
\]
\end{theorem}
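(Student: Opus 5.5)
The plan is to solve the auxiliary equation explicitly for each fixed $\lambda$, substitute the result into the resolved equation, and interchange the $\lambda$- and $s$-integrals. This mirrors the proof of \cref{thm:finite-lift}, with the finite sum over $m$ replaced by integration against $\mu$.

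For fixed $\lambda\ge0$, the equation $\partial_tz(\lambda,t)=-\lambda z(\lambda,t)+Cp(t)$ with $z(\lambda,0)=0$ is a linear ODE in $\HQ$. Variation of constants, as in \cref{thm:continuous-elimination}, gives the unique solution
\[
z(\lambda,t)=\int_0^te^{-\lambda(t-s)}Cp(s)\dd s .
\]
Substituting into the resolved equation gives
\[
B\int_0^\infty z(\lambda,t)\,\mu(\dd\lambda)
=B\int_0^\infty\int_0^te^{-\lambda(t-s)}Cp(s)\dd s\,\mu(\dd\lambda).
\]

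The key step is to swap the two integrals. The integrand is bounded in norm by $e^{-\lambda(t-s)}\norm{C}\,\norm{p(s)}$, which is nonnegative. Since $p$ is continuous on $[0,t]$, Tonelli's theorem applied to this norm bound, together with \eqref{eq:bernstein}, gives
\[
\int_0^\infty\!\!\int_0^t e^{-\lambda(t-s)}\norm{C}\,\norm{p(s)}\dd s\,\mu(\dd\lambda)
\le\norm{C}\sup_{[0,t]}\norm{p}\int_0^t k(u)\dd u<\infty,
\]
because $k$ is locally integrable. This bound is exactly the finiteness hypothesis on the displayed Bochner integrals. Fubini's theorem for Bochner integrals then permits the interchange. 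The inner $\lambda$-integral becomes $\int_0^\infty e^{-\lambda(t-s)}\mu(\dd\lambda)=k(t-s)$ for $s<t$, and the term reduces to $B\int_0^tk(t-s)Cp(s)\dd s$. Hence every solution of the continuum system has a resolved component solving the Volterra equation.

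For the converse, take a solution $p$ of the Volterra equation and define $z$ by the displayed formula. The same computation shows that $(p,z)$ solves the continuum system, which establishes that the realization is exact.

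The main obstacle is the Fubini step together with the singularity of $k$ at $t=0$, as in fractional kernels where $\mu$ has infinite mass. I would handle it only through local integrability of $k$, which is what the Tonelli bound above uses, and not through total mass of $\mu$. A secondary point is that $\lambda\mapsto z(\lambda,t)$ must be measurable and Bochner integrable. This follows because $z(\lambda,t)$ is continuous in $\lambda$ and satisfies the integrable bound $\norm{z(\lambda,t)}\le\norm{C}\sup_{[0,t]}\norm{p}\,\min\bigl(t,1/\lambda\bigr)$, whose $\mu$-integral is controlled by the same Tonelli estimate.
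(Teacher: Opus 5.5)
Your proposal is correct and follows the paper's route: solve the auxiliary equation for each fixed $\lambda$, exchange the $s$- and $\lambda$-integrals by Tonelli (which you upgrade to Fubini for Bochner integrals using the bound $\norm{C}\sup_{[0,t]}\norm{p}\int_0^t k(u)\dd u$), and recognize the inner integral as the Bernstein representation \eqref{eq:bernstein}. Your domination estimate, the measurability remark, and the converse direction fill in details that the paper's two-line proof leaves implicit.
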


\begin{proof}
Solve the linear equation for $z(\lambda,t)$ and apply Tonelli's theorem to exchange the time and $\lambda$ integrals.  The inner integral is \eqref{eq:bernstein}.
\end{proof}

\begin{corollary}[Fractional diffusive representation]
\label{cor:fractional}
For $0<\alpha<1$,
\[
k_\alpha(t)=\frac{t^{-\alpha}}{\Gamma(1-\alpha)}
=\frac{\sin(\pi\alpha)}{\pi}
\int_0^\infty e^{-\lambda t}\lambda^{\alpha-1}\dd\lambda.
\]
Thus fractional Volterra memory is an infinite positive Markovian lift.  Tempering by $e^{-\tau t}$ shifts the relaxation spectrum from $\lambda$ to $\lambda+\tau$.
\end{corollary}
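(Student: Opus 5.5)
The identity follows from the Gamma-function integral, and the plan is to evaluate the right-hand side directly. For fixed $t>0$, substitute $s=\lambda t$ in $\int_0^\infty e^{-\lambda t}\lambda^{\alpha-1}\dd\lambda$. This gives $t^{-\alpha}\int_0^\infty e^{-s}s^{\alpha-1}\dd s=\Gamma(\alpha)\,t^{-\alpha}$, and the integral converges at $0$ because $\alpha>0$. Multiplying by $\sin(\pi\alpha)/\pi$ and using Euler's reflection formula $\Gamma(\alpha)\Gamma(1-\alpha)=\pi/\sin(\pi\alpha)$ then yields $t^{-\alpha}/\Gamma(1-\alpha)=k_\alpha(t)$. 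Thus $k_\alpha$ has the Bernstein form \eqref{eq:bernstein} with the explicit positive measure
\[
\mu_\alpha(\dd\lambda)=\frac{\sin(\pi\alpha)}{\pi}\lambda^{\alpha-1}\dd\lambda,
\]
which is positive since $0<\alpha<1$. The function $k_\alpha$ is also locally integrable, because $t^{-\alpha}$ is integrable near $0$ for $\alpha<1$.

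Next I will invoke Theorem~\ref{thm:diffusive-lift} with $k=k_\alpha$ and $\mu=\mu_\alpha$. This gives the continuum Markovian realization with $\partial_tz=-\lambda z+Cp$, and the measure $\mu_\alpha$ has infinite total mass and unbounded support. This is the precise sense in which the lift is ``infinite positive.'' The only step needing care is the finiteness hypothesis of that theorem. Solving gives $z(\lambda,t)=\int_0^te^{-\lambda(t-s)}Cp(s)\dd s$, so that
\[
\int_0^\infty\norm{z(\lambda,t)}\mu_\alpha(\dd\lambda)\le\int_0^t k_\alpha(t-s)\norm{Cp(s)}\dd s
\]
by Tonelli. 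The right-hand side is finite for locally bounded $p$, since $k_\alpha$ is locally integrable. I expect this verification, namely that the Bochner integral is finite despite the nonintegrable density of $\mu_\alpha$ at $\lambda=\infty$, to be the main technical point. It is handled by exchanging the order of integration before estimating, not after.

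For the tempered statement, multiply the representation by $e^{-\tau t}$ with $\tau\ge0$:
\[
e^{-\tau t}k_\alpha(t)=\int_0^\infty e^{-(\lambda+\tau)t}\mu_\alpha(\dd\lambda).
\]
This is again completely monotone, with representing measure the pushforward of $\mu_\alpha$ under $\lambda\mapsto\lambda+\tau$. Equivalently, in the lift each mode obeys $\partial_tz=-(\lambda+\tau)z+Cp$. The relaxation spectrum is therefore shifted from $[0,\infty)$ to $[\tau,\infty)$, as claimed.
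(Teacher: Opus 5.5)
Your proposal is correct and follows the route the paper leaves implicit. The substitution $s=\lambda t$ together with Euler's reflection formula identifies the positive Bernstein density $\frac{\sin(\pi\alpha)}{\pi}\lambda^{\alpha-1}$, after which \cref{thm:diffusive-lift} applies, tempering is a shift of the representing measure, and your Tonelli check of the finiteness hypothesis (despite the infinite total mass of $\mu_\alpha$) is a useful detail the paper does not spell out.
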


\begin{remark}[What fractional memory does and does not mean]
A fractional kernel encodes a particular relaxation spectrum.  Projection alone implies a Volterra kernel, not a power law.  Fractionality must be justified by kernel identification, asymptotics, or a physical mechanism.  The AMOC application below does not make such a claim.
\end{remark}

\subsection{Approximation by exponential sums}

\begin{theorem}[Kernel-to-solution perturbation]
\label{thm:kernel-perturbation}
Let $p$ and $p_M$ solve the same linear Volterra initial-value problem on $[0,T]$ with kernels $K$ and $K_M$.  Put
\[
\eps_M=\norm{K-K_M}_{L^1(0,T)},
\qquad
M_T=\sup_{0\le t\le T}\norm{p_M(t)}.
\]
Then
\[
\sup_{0\le t\le T}\norm{p(t)-p_M(t)}
\le
T\eps_MM_T
\exp\!\left[T\bigl(\norm A+\norm K_{L^1(0,T)}\bigr)\right].
\]
\end{theorem}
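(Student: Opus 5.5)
Set $e(t)=p(t)-p_M(t)$. Both solutions share the same initial value and the same forcing, so subtracting the two Volterra equations gives
\[
\dot e(t)=Ae(t)+\int_0^tK(t-s)e(s)\dd s+\int_0^t\bigl(K(t-s)-K_M(t-s)\bigr)p_M(s)\dd s,
\qquad e(0)=0 .
\]
The plan is to treat the last integral as a known source, bound it by $\eps_MM_T$, and then close the estimate with Gronwall's inequality.

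The steps are as follows.

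\textbf{Step 1: integral form.} Integrate from $0$ to $t$ to obtain
\[
e(t)=\int_0^t\Bigl[Ae(\tau)+\int_0^\tau K(\tau-s)e(s)\dd s+g(\tau)\Bigr]\dd\tau,
\]
where $g$ denotes the source term above.

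\textbf{Step 2: bound the source.} For $\tau\le T$,
\[
\norm{g(\tau)}\le\int_0^\tau\norm{K(\tau-s)-K_M(\tau-s)}\,\norm{p_M(s)}\dd s\le \eps_M M_T .
\]
Integrating over $[0,t]$ therefore contributes at most $t\eps_MM_T\le T\eps_MM_T$.

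\textbf{Step 3: bound the memory term.} Put $\phi(t)=\sup_{0\le s\le t}\norm{e(s)}$. Then
\[
\Bigl\|\int_0^\tau K(\tau-s)e(s)\dd s\Bigr\|\le \norm K_{L^1(0,T)}\,\phi(\tau).
\]
It is important to bound this term by the running supremum $\phi$ rather than by $\norm{e(\tau)}$, because the convolution sees the whole past of $e$.

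\textbf{Step 4: close with Gronwall.} Combining Steps 1--3 gives, for every $t\le T$,
\[
\norm{e(t)}\le T\eps_MM_T+\bigl(\norm A+\norm K_{L^1(0,T)}\bigr)\int_0^t\phi(\tau)\dd\tau .
\]
The right-hand side is nondecreasing in $t$, so the same bound holds for $\phi(t)$. Since $\phi$ is nondecreasing and continuous, the standard Gronwall inequality applies and yields
\[
\phi(T)\le T\eps_MM_T\exp\!\bigl[T\bigl(\norm A+\norm K_{L^1(0,T)}\bigr)\bigr],
\]
which is the claim. Note that the prefactor $T$ comes directly from the source integral in Step 2.

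The main obstacle is the memory term in Step 3. It is not pointwise in $e$, so a naive Gronwall argument on $\norm{e(t)}$ alone does not close. Passing to the monotone envelope $\phi$ is what makes the argument work. The remaining points are technical regularity, and they can be settled by working throughout with the integral (mild) form of the equations:
\begin{itemize}
\item $K$ must be locally integrable, so that $\norm K_{L^1(0,T)}$ is finite.
\item $p_M$ must be continuous, so that $M_T$ is finite.
\item $e$ must be continuous, so that $\phi$ is finite and measurable.
\end{itemize}
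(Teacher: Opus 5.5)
Your proof is correct and follows the paper's route: form the error equation $\dot e=Ae+K*e+(K-K_M)*p_M$, integrate, bound using the $L^1$ norms, and apply Gronwall. The only difference is in the memory term, and it means your remark that a Gronwall argument on $\|e(t)\|$ alone does not close is inaccurate: the paper uses Fubini to write $\int_0^t\int_0^\tau K(\tau-s)e(s)\,ds\,d\tau=\int_0^t\bigl(\int_s^tK(\tau-s)\,d\tau\bigr)e(s)\,ds$, which is bounded by $\|K\|_{L^1(0,T)}\int_0^t\|e(s)\|\,ds$ and so closes Gronwall directly on $\|e\|$, making your running supremum $\phi$ a valid but unnecessary device.
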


\begin{proof}
For $e=p-p_M$,
\[
\dot e=Ae+K*e+(K-K_M)*p_M.
\]
Integrate, use Fubini and the $L^1$ kernel bound, and apply Gronwall's inequality.
\end{proof}

\section{Schur--Volterra stability and finite-horizon response}
\label{sec:schur}

The full operator and the reduced memory are two representations of the same dynamics.  Their spectra are connected by a Schur complement of the resolvent.

\subsection{Exact factorization}

\begin{theorem}[Schur--resolvent factorization]
\label{thm:schur-factor}
Let
\[
J=\begin{pmatrix}A&B\\C&D\end{pmatrix}
\]
act on $\HP\oplus\HQ$.  For $z\in\rho(D)$ define
\begin{equation}
\label{eq:Phi}
\Phi(z)=z\II_{\HP}-A-B(z\II_{\HQ}-D)^{-1}C.
\end{equation}
Then $z\II-J$ is invertible if and only if $\Phi(z)$ is invertible.  In finite dimensions,
\begin{equation}
\label{eq:det-factor}
\det(z\II-J)=\det(z\II-D)\det\Phi(z).
\end{equation}
When both factors are invertible,
\begin{align}
(z\II-J)^{-1}_{PP}&=\Phi(z)^{-1},
\label{eq:resolved-resolvent}\\
(z\II-J)^{-1}_{PQ}&=\Phi(z)^{-1}B(z\II-D)^{-1}.
\end{align}
\end{theorem}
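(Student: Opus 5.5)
The plan is the standard block LDU (Schur complement) factorization of $zI-J$ with respect to the hidden block, which is invertible because $z\in\rho(D)$. Write $R_D(z)=(z\II_{\HQ}-D)^{-1}$. The key identity, to be verified by multiplying out the three block matrices, is
\[
z\II-J=
\begin{pmatrix}\II&-BR_D(z)\\0&\II\end{pmatrix}
\begin{pmatrix}\Phi(z)&0\\0&z\II-D\end{pmatrix}
\begin{pmatrix}\II&0\\-R_D(z)C&\II\end{pmatrix}.
\]
Checking it is routine: the $(1,1)$ entry is $\Phi(z)+BR_D(z)(z\II-D)R_D(z)C=z\II-A$, the off-diagonal entries give $-B$ and $-C$, and the $(2,2)$ entry is $z\II-D$. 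Nothing here needs more than bounded operators on $\HP\oplus\HQ$, so the identity holds beyond finite dimensions.

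Both outer factors are unipotent block triangular operators. Each is invertible, with inverse obtained by flipping the sign of its off-diagonal block. Hence $z\II-J$ is invertible if and only if the middle block-diagonal factor is. Since $z\II-D$ is invertible by hypothesis, that happens if and only if $\Phi(z)$ is invertible. This gives the first claim in both directions.

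In finite dimensions I would take determinants of the factorization. The triangular factors have determinant $1$, and the block-diagonal determinant is $\det\Phi(z)\det(z\II-D)$, which yields \eqref{eq:det-factor}.

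For the resolvent blocks, invert the factorization:
\[
(z\II-J)^{-1}=
\begin{pmatrix}\II&0\\R_D C&\II\end{pmatrix}
\begin{pmatrix}\Phi^{-1}&0\\0&R_D\end{pmatrix}
\begin{pmatrix}\II&BR_D\\0&\II\end{pmatrix}.
\]
Its first block row is $\bigl(\Phi^{-1},\ \Phi^{-1}BR_D\bigr)$, which gives \eqref{eq:resolved-resolvent} and the $PQ$ formula directly. One could add the remaining blocks $R_DC\Phi^{-1}$ and $R_D+R_DC\Phi^{-1}BR_D$, although the statement does not need them.

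There is no real obstacle. The only point needing care is the operator-theoretic reading of "invertible": in infinite dimensions one needs bounded inverses and $D$ bounded, which holds here since $\HH$ is finite-dimensional. The sole computation is confirming the sign conventions in the $(1,1)$ entry.
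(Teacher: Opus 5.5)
Your proposal is correct and follows the same route as the paper: block Gaussian elimination of $z\II-J$ against the invertible hidden block. In both, $\Phi(z)$ is the $PP$ Schur complement, and the resolvent blocks are read off by inverting the unipotent triangular factors; your explicit factorization and the check of the $(1,1)$ entry spell out what the paper's two-line proof leaves implicit.
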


\begin{proof}
Apply block Gaussian elimination to $z\II-J$.  The $PP$ Schur complement is \eqref{eq:Phi}; inversion of the triangular factors gives the stated blocks.
\end{proof}

\begin{corollary}[Exact Schur stability criterion]
\label{cor:schur-stability}
Assume $\rho(D)<1$.  Then $J$ is Schur stable if and only if $\Phi(z)$ is invertible for every $|z|\ge1$.
\end{corollary}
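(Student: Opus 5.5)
The corollary follows from the Schur--resolvent factorization (Theorem~\ref{thm:schur-factor}) together with the fact that $\rho(D)<1$ places the whole region $\{|z|\ge1\}$ inside the resolvent set of $D$. Recall that $J$ is Schur stable exactly when $\spec(J)\subset\{|z|<1\}$, that is, when $z\II-J$ is invertible for every $|z|\ge1$. The plan is to translate this pointwise invertibility of $z\II-J$ into pointwise invertibility of $\Phi(z)$, using the first assertion of Theorem~\ref{thm:schur-factor} at each such $z$.

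\textbf{Step 1: $\Phi$ is defined on the region of interest.} Since $\rho(D)<1$, every $z$ with $|z|\ge1$ lies outside $\spec(D)$. Hence $z\II_{\HQ}-D$ is invertible and $\Phi(z)$ in \eqref{eq:Phi} is well defined on all of $\{|z|\ge1\}$. This is the only place the hypothesis $\rho(D)<1$ is used, and it is what lets the criterion be stated without excluding any exceptional points.

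\textbf{Step 2: transfer invertibility pointwise.} For each fixed $z$ with $|z|\ge1$, Theorem~\ref{thm:schur-factor} gives that $z\II-J$ is invertible if and only if $\Phi(z)$ is. Equivalently, in finite dimensions one can read this off \eqref{eq:det-factor}: there $\det(z\II-D)\neq0$, so $\det(z\II-J)=0$ exactly when $\det\Phi(z)=0$. Consequently
\[
\spec(J)\cap\{|z|\ge1\}=\{\,|z|\ge1:\ \Phi(z)\text{ is not invertible}\,\}.
\]

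\textbf{Step 3: conclude.} $J$ is Schur stable if and only if the left-hand set above is empty. By Step~2 this happens if and only if $\Phi(z)$ is invertible for every $|z|\ge1$.

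There is no real obstacle; the only point needing care is Step~1. Without $\rho(D)<1$, an eigenvalue of $D$ on or outside the unit circle could be an eigenvalue of $J$ that $\Phi$ never detects, because $\Phi$ is not defined there. The hypothesis removes this possibility. The same argument works on the $\HP\oplus\HQ$ setting of the theorem using only its first assertion, so finite dimensionality is not needed beyond the determinant shortcut in Step~2.
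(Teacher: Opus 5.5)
Your proposal is correct and follows the paper's argument. Since $\rho(D)<1$, the hidden factor $\det(z\II-D)$ has no zeros on $\{|z|\ge1\}$, so by \eqref{eq:det-factor} (or, as you note, the pointwise invertibility equivalence in \cref{thm:schur-factor}) the eigenvalues of $J$ in that region are exactly the points where $\Phi(z)$ is singular, and relying on the invertibility equivalence rather than the determinant merely removes the need for finite dimensionality.
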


\begin{proof}
The hidden factor has no zeros for $|z|\ge1$.  The conclusion follows from \eqref{eq:det-factor}.
\end{proof}

At the stationary boundary $z=1$,
\begin{equation}
\label{eq:static-schur}
\Phi(1)=\II-A-B(\II-D)^{-1}C.
\end{equation}
Hence a unit multiplier is not governed by $\II-A$ alone; the hidden return renormalizes the restoring operator.

\begin{definition}[Resolved boundary response]
If $J$ and $D$ are Schur stable, define
\begin{equation}
\label{eq:resolvent-response}
\mathcal R_\Phi=
\max_{|z|=1}\norm{\Phi(z)^{-1}},
\qquad
 d_\Phi=
\min_{|z|=1}\sigma_{\min}(\Phi(z)).
\end{equation}
In the Euclidean metric, $\mathcal R_\Phi=d_\Phi^{-1}$.
\end{definition}

A divergence of $\mathcal R_\Phi$ indicates an approaching unit-circle singularity of the resolved Schur factor.  Large values can also reflect poor conditioning and therefore need not reduce to a single eigenvalue distance.

\subsection{Resolved propagation and hidden transfer}

Set
\[
X_h=PJ^hP\big|_{\HP},
\qquad
Y_h=QJ^hP\big|_{\HP}.
\]

\begin{theorem}[Exact recurrence for the resolved propagator]
\label{thm:resolved-recurrence}
The blocks satisfy $X_0=\II$, $Y_0=0$ and
\begin{equation}
\label{eq:resolved-recurrence}
X_{h+1}=AX_h+
\sum_{j=0}^{h-1}BD^{h-1-j}CX_j.
\end{equation}
Consequently, $X_h-A^h$ is exactly the sum of paths that leave the resolved space and return within $h$ steps.
\end{theorem}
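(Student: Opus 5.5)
The plan is to apply $J$ blockwise to the columns $(X_h,Y_h)$ and then eliminate $Y$ exactly as in Theorem~\ref{thm:discrete-elimination}. Write $J^{h+1}P=J\,(J^hP)$. Since $J^hP|_{\HP}$ has resolved block $X_h$ and hidden block $Y_h$, the block form of $J$ gives the coupled first-order system
\[
X_{h+1}=AX_h+BY_h,\qquad Y_{h+1}=CX_h+DY_h,
\]
with initial data $X_0=P|_{\HP}=\II$ and $Y_0=QP|_{\HP}=0$, because $P$ and $Q$ are complementary orthogonal projectors.

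The hidden recursion is a linear inhomogeneous difference equation driven by $CX_h$. By induction, exactly as in \eqref{eq:q-elim} with $q_0=0$ and $u_Q=0$,
\[
Y_h=\sum_{j=0}^{h-1}D^{h-1-j}CX_j .
\]
The inductive step is $Y_{h+1}=CX_h+D\sum_{j<h}D^{h-1-j}CX_j$, which is the same sum with upper index $h$. Substituting this into $X_{h+1}=AX_h+BY_h$ gives \eqref{eq:resolved-recurrence}. Equivalently, this is Theorem~\ref{thm:discrete-elimination} applied column by column to the unforced system with $p_0=\II$ and $q_0=0$, with kernels $K_\ell=BD^{\ell-1}C$.

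For the path interpretation, expand $X_h=PJ^hP=P(PJP+PJQ+QJP+QJQ)^hP$ as a sum over words of length $h$ in the four blocks. Equivalently, write $J^h=\sum J_{i_h}\cdots J_{i_1}$ with each factor one of $A$, $B$, $C$, $D$, compatible with the sequence of subspaces visited. A word whose start and end lie in $\HP$ either stays in $\HP$ at every step, which contributes exactly $A^h$, or visits $\HQ$ at least once. In the latter case it leaves through $C$, spends some number of steps in $\HQ$ through $D$, and returns through $B$, possibly several times. So $X_h-A^h$ is precisely the sum of all words with at least one excursion into $\HQ$, all of which return by time $h$, since they end in $\HP$.

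I would also note the consistency check that unrolling the recurrence reproduces this expansion. Decomposing by the time $j$ of the last departure from $\HP$ before the final return gives the term $BD^{h-1-j}CX_j$, which is the first-excursion/last-excursion decomposition. I expect no real obstacle. The only care needed is the bookkeeping of the path expansion and the indexing convention (the empty sum when $h=0$), so that the combinatorial statement matches the recurrence exactly.
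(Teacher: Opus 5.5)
Your proposal is correct and matches the paper's proof: block multiplication gives $X_{h+1}=AX_h+BY_h$ and $Y_{h+1}=CX_h+DY_h$, and solving the hidden recurrence and substituting yields \eqref{eq:resolved-recurrence}. Your word expansion over the blocks $A,B,C,D$, decomposed by the last departure from $\HP$, explicitly justifies the path interpretation in the ``consequently'' clause, which the paper leaves implicit.
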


\begin{proof}
Block multiplication gives $X_{h+1}=AX_h+BY_h$ and $Y_{h+1}=CX_h+DY_h$.  Solve the second recurrence and substitute into the first.
\end{proof}

\begin{definition}[Finite-horizon operator diagnostics]
For $H\ge1$ define
\begin{align}
\mathcal G_H&=\max_{1\le h\le H}\norm{J^h},\\
\mathcal E_H&=\max_{1\le h\le H}
\bigl(\norm{J^h}-\rho(J)^h\bigr),\\
\mathcal H_H&=\max_{1\le h\le H}\norm{PJ^hQ},\\
\mathcal M_H&=\max_{1\le h\le H}\norm{PJ^hP-A^h}.
\end{align}
They measure total gain, excess over the modal lower bound, hidden-to-observed transfer, and projected-memory correction, respectively.
\end{definition}

These quantities answer different questions from $\rho(J)$.  A Schur-stable system may have large $\mathcal G_H$, $\mathcal H_H$, or $\mathcal M_H$.

\begin{theorem}[Decoder--oracle identity]
\label{thm:decoder-oracle}
Consider two forecasts generated by the same $J$, forcing, and resolved initial condition, but initialized with hidden states $q_0$ and $\widehat q_0$.  Then
\begin{equation}
\label{eq:decoder-oracle}
\widehat p_h^{\,\mathrm{dec}}-
\widehat p_h^{\,\mathrm{orc}}
=PJ^hQ(\widehat q_0-q_0).
\end{equation}
In particular,
\[
\norm{\widehat p_h^{\,\mathrm{dec}}-
\widehat p_h^{\,\mathrm{orc}}}
\le \norm{PJ^hQ}\,\norm{\widehat q_0-q_0}.
\]
\end{theorem}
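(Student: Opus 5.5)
The plan is to use linearity of the discrete system \eqref{eq:discrete-full}. Nothing beyond iterating the recursion is needed.

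First, unroll the recursion. For any initial state $x_0$ and forcing sequence $(u_n)$, induction on $h$ gives
\[
x_h=J^hx_0+\sum_{j=0}^{h-1}J^{h-1-j}u_j .
\]
The resolved forecast is the $P$-component $p_h=Px_h$.

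Second, take the difference of the two forecasts. Both use the same $J$, the same forcing, and the same resolved initial condition $p_0$. The decoder is initialized at $(p_0,\widehat q_0)$ and the oracle at $(p_0,q_0)$. The forcing sums and the contribution $J^hPx_0$ are identical in the two forecasts, so they cancel. What remains is
\[
\widehat p_h^{\,\mathrm{dec}}-\widehat p_h^{\,\mathrm{orc}}=PJ^h\bigl(x_0^{\mathrm{dec}}-x_0^{\mathrm{orc}}\bigr)=PJ^hQ(\widehat q_0-q_0).
\]
The last equality holds because the initial-state difference lies in $\HQ$: it equals $Q(\widehat q_0-q_0)$ when $\widehat q_0-q_0$ is viewed as an element of $\HH$.

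Third, the norm bound follows from the operator-norm inequality $\norm{PJ^hQv}\le\norm{PJ^hQ}\,\norm{v}$ with $v=\widehat q_0-q_0$.

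As a cross-check, the same identity can be read off Theorem~\ref{thm:discrete-elimination}. The two reduced Volterra equations \eqref{eq:discrete-volterra} differ only in the term $BD^nq_0$ inside $\eta_n$. The resolved propagator driven by that term gives the $PQ$ block of $J^h$.

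The only delicate point is bookkeeping of conventions. One must treat $PJ^hQ$ as an operator from $\HQ$ to $\HP$, consistent with the definition of $\mathcal H_H$. One must also make sure the oracle and decoder labels match the sign in \eqref{eq:decoder-oracle}. There is no analytic obstacle.
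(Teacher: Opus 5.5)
Your proposal is correct and follows essentially the paper's argument: by linearity, the forcing and resolved-initial-condition contributions cancel. The initial difference lies in $\HQ$ and propagates as $J^hQ(\widehat q_0-q_0)$, and projecting with $P$ gives the identity, with the norm bound immediate. Your explicit unrolled solution and the Volterra cross-check are just more detailed bookkeeping of the same subtraction.
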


\begin{proof}
Subtract the two full-state recursions.  The initial difference lies in $\HQ$ and evolves as $J^hQ(\widehat q_0-q_0)$.  Project with $P$.
\end{proof}

\subsection{Metric covariance}

Let $G\succ0$ define $\norm{x}_G^2=x^*Gx$ and set $R=G^{1/2}$.

\begin{theorem}[Metric representation]
\label{thm:metric}
For every $h\ge0$,
\[
\norm{J^h}_G=\norm{RJ^hR^{-1}}_2.
\]
The spectrum is unchanged by the similarity $J\mapsto RJR^{-1}$.  For two metrics $G_1,G_2$ and $S=R_2R_1^{-1}$,
\[
\kappa_2(S)^{-1}\norm{J^h}_{G_1}
\le
\norm{J^h}_{G_2}
\le
\kappa_2(S)\norm{J^h}_{G_1}.
\]
The corresponding bounds hold for physically matched hidden-to-observed maps.
\end{theorem}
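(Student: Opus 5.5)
The argument reduces everything to the definition of the $G$-norm and the operator $2$-norm, and then to submultiplicativity. First I establish the vector identity $\norm{x}_G=\norm{Rx}_2$, which holds because $R=G^{1/2}$ is Hermitian positive definite, so $x^*Gx=(Rx)^*(Rx)$. For the induced operator norm I substitute $y=Rx$, which is a bijection since $R$ is invertible:
\[
\norm{J^h}_G=\sup_{x\ne0}\frac{\norm{RJ^hx}_2}{\norm{Rx}_2}
=\sup_{y\ne0}\frac{\norm{RJ^hR^{-1}y}_2}{\norm{y}_2}
=\norm{RJ^hR^{-1}}_2 .
\]
Spectral invariance follows from $\det(z\II-RJR^{-1})=\det(z\II-J)$. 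I also note that $(RJR^{-1})^h=RJ^hR^{-1}$, so the identity is consistent across all powers.

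For the comparison between two metrics, I write
\[
R_2J^hR_2^{-1}=S\,(R_1J^hR_1^{-1})\,S^{-1},\qquad S=R_2R_1^{-1}.
\]
Submultiplicativity of $\norm{\cdot}_2$ then gives $\norm{J^h}_{G_2}\le\norm{S}_2\norm{S^{-1}}_2\norm{J^h}_{G_1}=\kappa_2(S)\norm{J^h}_{G_1}$. The reverse inequality comes from the symmetric relation $R_1J^hR_1^{-1}=S^{-1}(R_2J^hR_2^{-1})S$, using $\kappa_2(S^{-1})=\kappa_2(S)$.

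For the final sentence about hidden-to-observed maps, I interpret ``physically matched'' as follows. The map $PJ^hQ$ is measured between $(\HQ,G^Q)$ and $(\HP,G^P)$, with metrics $G^P_i$ and $G^Q_i$ and square roots $R^P_i$ and $R^Q_i$. Then $\norm{PJ^hQ}_{G_i}=\norm{R^P_iPJ^hQ(R^Q_i)^{-1}}_2$ by the same substitution argument. Writing $S_P=R^P_2(R^P_1)^{-1}$ and $S_Q=R^Q_2(R^Q_1)^{-1}$, the bound becomes $\norm{S_P}\norm{S_Q^{-1}}$, which is at most $\kappa_2(S)$ when the metrics are block diagonal, $G_i=G_i^P\oplus G_i^Q$. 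The same reasoning applies to $PJ^hP-A^h$.

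This last step is the only place requiring care: if the metrics are not block diagonal, $P$ and $Q$ are no longer $G$-orthogonal. I therefore state the block-diagonal assumption explicitly as the meaning of ``physically matched.'' Otherwise every step is routine.
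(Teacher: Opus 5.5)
Your proof is correct and follows the paper's own argument. The change of variables $y=Rx$ gives $\norm{J^h}_G=\norm{RJ^hR^{-1}}_2$, and the metric comparison follows from $R_2J^hR_2^{-1}=S(R_1J^hR_1^{-1})S^{-1}$ together with submultiplicativity.

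Your block-diagonal reading of ``physically matched'' is a reasonable sharpening of a sentence the paper leaves unproved. Note, however, that if $PJ^hQ$ is measured as an operator on all of $\HH$ in the $G$-induced norm, the same $\kappa_2(S)$ bound holds for arbitrary $G_1,G_2$, because the similarity argument uses nothing about $J^h$ beyond linearity.
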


\begin{proof}
Change variables with $R$.  The comparison follows from
$R_2J^hR_2^{-1}=S(R_1J^hR_1^{-1})S^{-1}$.
\end{proof}

Thus spectral stability is coordinate invariant, whereas transient gain is a statement about a specified physical norm.  Robust claims should report both.

\section{When does a scalar early-warning signal survive projection?}
\label{sec:ews-theory}

We now state a sufficient theorem for genuine scalar critical slowing down.  The discrete setting matches annual local operators; the continuous analogue follows by replacing powers with semigroups and $1-r$ with the leading decay rate.

Let
\begin{equation}
\label{eq:stochastic-linear}
x_{n+1}=J_\eps x_n+\xi_{n+1},
\qquad
\EE\xi_n=0,
\qquad
\EE\xi_n\xi_n^*=\Sigma_\eps,
\end{equation}
where $J_\eps$ is Schur stable.  The scalar observation is
\[
y_n=c_\eps^*x_n.
\]

\begin{assumption}[Simple observable critical mode]
\label{ass:critical-mode}
As $\eps\downarrow0$:
\begin{enumerate}[label=(\alph*)]
\item $J_\eps$ has a simple real eigenvalue $r_\eps\in(0,1)$ with $r_\eps\to1$;
\item its spectral projector is $\Pi_\eps=v_\eps w_\eps^*$, normalized by $w_\eps^*v_\eps=1$;
\item $J_\eps=r_\eps\Pi_\eps+N_\eps$, with $\Pi_\eps N_\eps=N_\eps\Pi_\eps=0$ and
\[
\norm{N_\eps^k}\le C\gamma^k,
\qquad 0<\gamma<1,
\]
uniformly in $\eps$;
\item $\norm{v_\eps}$, $\norm{w_\eps}$, and $\norm{\Sigma_\eps}$ are uniformly bounded;
\item the mode is observed and excited:
\[
\abs{c_\eps^*v_\eps}\ge c_0>0,
\qquad
w_\eps^*\Sigma_\eps w_\eps\ge q_0>0.
\]
\end{enumerate}
\end{assumption}

\begin{theorem}[Scalar critical slowing down under projection]
\label{thm:scalar-ews}
Under \cref{ass:critical-mode}, the stationary covariance $\Gamma_\eps$ exists and
\begin{equation}
\label{eq:cov-asymptotic}
\Gamma_\eps=
\frac{q_\eps}{1-r_\eps^2}
 v_\eps v_\eps^*+R_\eps,
\qquad
q_\eps=w_\eps^*\Sigma_\eps w_\eps,
\qquad
\sup_\eps\norm{R_\eps}<\infty.
\end{equation}
Consequently,
\begin{align}
\Var(y_n)
&=
\frac{q_\eps\abs{c_\eps^*v_\eps}^2}{1-r_\eps^2}+O(1),
\label{eq:var-asymptotic}\\
\Corr(y_{n+1},y_n)
&=r_\eps+O(1-r_\eps^2)
\longrightarrow1.
\label{eq:ac-asymptotic}
\end{align}
\end{theorem}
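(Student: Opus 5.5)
The plan is to write the stationary covariance as the series
\[
\Gamma_\eps=\sum_{k\ge0}J_\eps^k\Sigma_\eps (J_\eps^*)^k,
\]
which converges because $J_\eps$ is Schur stable, and which solves the discrete Lyapunov equation $\Gamma=J\Gamma J^*+\Sigma$. The decomposition in \cref{ass:critical-mode}(c) gives $\Pi_\eps N_\eps=N_\eps\Pi_\eps=0$ and $\Pi_\eps^2=\Pi_\eps$. Hence
\[
J_\eps^k=r_\eps^k\Pi_\eps+N_\eps^k\qquad (k\ge1),
\]
and the same formula holds at $k=0$ once we interpret $N_\eps^0=\II-\Pi_\eps$. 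Substituting this into the series splits $\Gamma_\eps$ into four groups of terms.

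\textbf{The critical term.} It is
\[
\sum_k r_\eps^{2k}\,\Pi_\eps\Sigma_\eps\Pi_\eps^*
=\frac{1}{1-r_\eps^2}\,v_\eps (w_\eps^*\Sigma_\eps w_\eps) v_\eps^*
=\frac{q_\eps}{1-r_\eps^2}\,v_\eps v_\eps^*.
\]

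\textbf{The noncritical term.} This is $\sum_k N_\eps^k\Sigma_\eps N_\eps^{k*}$. By (c) and (d) it is bounded by $C^2\norm{\Sigma_\eps}/(1-\gamma^2)$, uniformly in $\eps$.

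\textbf{The cross terms.} These are $\sum_k r_\eps^k\,\Pi_\eps\Sigma_\eps N_\eps^{k*}$ and its adjoint. They are bounded by
\[
\norm{v_\eps}\norm{w_\eps}\norm{\Sigma_\eps}\,C\sum_k\gamma^k,
\]
using only $r_\eps<1$, so they are also uniformly bounded. The $k=0$ term, where $N^0=\II-\Pi$, is bounded in the same way using the bound on $\norm{\Pi_\eps}$. Collecting these bounded pieces into $R_\eps$ gives \eqref{eq:cov-asymptotic}. The variance formula \eqref{eq:var-asymptotic} then follows from $\Var(y_n)=c_\eps^*\Gamma_\eps c_\eps$. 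This needs $\norm{c_\eps}$ to be bounded; that is implicit in the assumptions, and I would state it explicitly. Alternatively, one can normalise $c_\eps$, since correlation is scale invariant.

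\textbf{The lag-one covariance.} Stationarity and independence of $\xi_{n+1}$ from $x_n$ give
\[
\operatorname{Cov}(y_{n+1},y_n)=c^*J\Gamma c.
\]
Using $J\Pi=r\Pi$ and $Jv=rv$, we get
\[
J\Gamma=\frac{r q}{1-r^2}\,vv^*+JR,
\]
and $JR$ is uniformly bounded. Writing $a_\eps=q_\eps\abs{c_\eps^*v_\eps}^2/(1-r_\eps^2)$, which is at least $q_0c_0^2/(1-r_\eps^2)$ by (e), we obtain
\[
\Corr=\frac{r_\eps a_\eps+O(1)}{a_\eps+O(1)}=r_\eps+O(a_\eps^{-1})=r_\eps+O(1-r_\eps^2).
\]
The last step uses the observability and excitation lower bounds in (e) to control $a_\eps^{-1}$. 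It also uses the fact that $\Var(y_{n+1})=\Var(y_n)$, so the normalisation is the square root of the product of two equal variances.

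\textbf{Main obstacle.} The main difficulty is bookkeeping rather than depth. The cross terms must be shown to be $O(1)$ rather than $O((1-r)^{-1})$. This is exactly where the exponential bound on $N_\eps^k$ that is uniform in $\eps$ matters, as opposed to mere Schur stability of $N_\eps$. The $k=0$ term also needs care, and so does the implicit uniform bound on $c_\eps$. I expect the rest to be routine algebra.
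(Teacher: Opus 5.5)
Your proposal is correct and follows the paper's proof. Both expand $\Gamma_\eps=\sum_{k\ge0}J_\eps^k\Sigma_\eps(J_\eps^*)^k$ through $J_\eps^k=r_\eps^k\Pi_\eps+N_\eps^k$, isolate the $\Pi$--$\Pi$ term $q_\eps v_\eps v_\eps^*/(1-r_\eps^2)$, bound the cross and $N$--$N$ terms by geometric series, and divide the lag-one covariance by the variance. Two of your additions are refinements the paper's proof skips: the convention $N_\eps^0=\II-\Pi_\eps$ at $k=0$, and the requirement $\sup_\eps\norm{c_\eps}<\infty$.

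One caution: normalizing $c_\eps$ is not a substitute for that bound. After normalization, (e) only gives $\abs{c_\eps^*v_\eps}/\norm{c_\eps}\ge c_0/\norm{c_\eps}$. Indeed $J_\eps=\diag(r_\eps,0)$, $\Sigma_\eps=\II$, $c_\eps=(1,M_\eps)$ satisfies every hypothesis yet yields $\Corr(y_{n+1},y_n)=r_\eps/\bigl(1+M_\eps^2(1-r_\eps^2)\bigr)$, which need not tend to $1$.
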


\begin{proof}
The stationary covariance is
\[
\Gamma_\eps=\sum_{k\ge0}J_\eps^k\Sigma_\eps(J_\eps^*)^k.
\]
Since $J_\eps^k=r_\eps^k\Pi_\eps+N_\eps^k$, the projector--projector term equals
\[
\sum_{k\ge0}r_\eps^{2k}\Pi_\eps\Sigma_\eps\Pi_\eps^*
=
\frac{q_\eps}{1-r_\eps^2}v_\eps v_\eps^*.
\]
The two cross terms are uniformly bounded by a geometric series in $r_\eps\gamma$, and the $N_\eps$--$N_\eps$ term by a geometric series in $\gamma^2$.  This proves \eqref{eq:cov-asymptotic} and \eqref{eq:var-asymptotic}.  The lag-one covariance has leading term
$r_\eps q_\eps\abs{c_\eps^*v_\eps}^2/(1-r_\eps^2)$ and a uniformly bounded remainder; division by \eqref{eq:var-asymptotic} gives \eqref{eq:ac-asymptotic}.
\end{proof}

\begin{remark}[Meaning of the assumptions]
The theorem requires more than $\rho(J_\eps)\to1$.  The critical mode must be simple, separated from faster modes, not rendered ill conditioned by a diverging eigenbasis, visible in the chosen scalar, and excited by the stochastic forcing.  These are precisely the assumptions that projection can violate.
\end{remark}

\subsection{Oscillatory critical modes}

\begin{proposition}[Lag-one correlation at an oscillatory crossing]
\label{prop:oscillatory}
Consider the normal two-dimensional block
\[
J_r=r
\begin{pmatrix}
\cos\omega&-\sin\omega\\
\sin\omega&\cos\omega
\end{pmatrix},
\qquad 0<r<1,
\]
with isotropic noise and any nonzero scalar coordinate observation.  Then stationary variance diverges as $(1-r^2)^{-1}$, but
\[
\Corr(y_{n+1},y_n)=r\cos\omega
\longrightarrow\cos\omega.
\]
Unless $\omega=0$, lag-one correlation does not approach one.
\end{proposition}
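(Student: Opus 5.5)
The plan is to compute the stationary covariance and the lag-one covariance in closed form, using the fact that $J_r=rU_\omega$ with $U_\omega$ a rotation, hence orthogonal and commuting with its transpose.

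\textbf{Stationary covariance.} With isotropic noise $\Sigma=\sigma^2\II$, the stationary covariance from the proof of \cref{thm:scalar-ews} is
\[
\Gamma_r=\sum_{k\ge0}J_r^k\Sigma(J_r^*)^k
=\sigma^2\sum_{k\ge0}r^{2k}U_\omega^kU_\omega^{-k}
=\frac{\sigma^2}{1-r^2}\II.
\]
This series converges because $\rho(J_r)=r<1$. The result shows that the covariance is isotropic and blows up like $(1-r^2)^{-1}$.

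\textbf{Variance of the observation.} For any nonzero observation vector $c$, the variance is
\[
\Var(y)=c^*\Gamma_r c=\frac{\sigma^2\norm{c}^2}{1-r^2}.
\]
This gives the claimed divergence. Note that \cref{thm:scalar-ews} cannot be invoked directly here, since there is no simple real critical eigenvalue. That is exactly the case this proposition addresses.

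\textbf{Lag-one covariance.} Since $\xi_{n+1}$ is independent of $x_n$, we have
\[
\EE\, x_{n+1}x_n^*=J_r\Gamma_r.
\]
Therefore
\[
\operatorname{Cov}(y_{n+1},y_n)=c^*J_r\Gamma_r c=\frac{\sigma^2 r\,c^*U_\omega c}{1-r^2}.
\]

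\textbf{Evaluating the rotation term.} For a real vector $c$, the antisymmetric part of $U_\omega$ contributes nothing to the quadratic form, so
\[
c^\top U_\omega c=\cos\omega\,\norm{c}^2.
\]
This holds for any real $c$, not only for the coordinate vectors $e_1,e_2$. For a complex $c$ one would instead take the real part of the covariance for a real-valued process. The statement, however, concerns coordinate observations, which are real.

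\textbf{Conclusion.} Dividing the lag-one covariance by the variance gives
\[
\Corr(y_{n+1},y_n)=r\cos\omega .
\]
This tends to $\cos\omega$ as $r\uparrow1$, which is strictly less than $1$ unless $\omega\equiv0 \pmod{2\pi}$.

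\textbf{Main point of care.} The only step needing attention is the vanishing of the antisymmetric contribution in the quadratic form. It uses both the realness of the observation and the isotropy of the noise. With anisotropic noise, $\Gamma_r$ would no longer be scalar, and the limit would depend on $c$. I would mention this in a short remark, to emphasize that isotropy is used only to make the formula exact.
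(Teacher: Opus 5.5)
Your proof is correct and follows the same route as the paper, which likewise notes that $\Gamma_r$ is a scalar multiple of the identity, so the one-step covariance $c^\top J_r\Gamma_r c$ equals $r\cos\omega$ times the variance; you make this explicit and extend it to any real $c$. One correction, which concerns only your closing aside: with anisotropic noise the exact finite-$r$ value depends on $c$, but the limit does not, because the traceless part of $\Sigma$ contributes only $O(1)$ to $\Gamma_r$ when $\omega\not\equiv0\pmod{\pi}$ (and $J_r=\pm r\II$ otherwise), so $\Corr(y_{n+1},y_n)\to\cos\omega$ in every case.
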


\begin{proof}
The stationary covariance is a scalar multiple of the identity.  The one-step covariance in any coordinate is therefore $r\cos\omega$ times the variance.
\end{proof}

A scalar AC(1) criterion is consequently tuned to a real positive multiplier.  A Hopf-type or multiscale transition calls for multivariate or frequency-aware diagnostics.

\subsection{Slow forcing}

The theorem is local and stationary.  In a slowly forced system $J(\lambda_n)$, a rolling window is interpretable quasi-statically only when parameter drift across the window is small relative to the local relaxation and spectral-separation scales.  Near criticality the relaxation time diverges, so the quasi-stationary window requirement becomes increasingly difficult.  This is one reason that estimated tipping times are more fragile than the existence of a local trend \citep{BenYami2024}.

\section{Failure mechanisms and non-identifiability}
\label{sec:failures}

\subsection{Invisible and unexcited critical modes}

\begin{proposition}[Projection can erase critical slowing down]
\label{prop:invisible}
In \cref{thm:scalar-ews}, if $c_\eps^*v_\eps=0$, the leading divergent covariance component is absent from $y_n$.  If $w_\eps^*\Sigma_\eps w_\eps=0$, the critical mode is not stochastically excited.  In either case, scalar variance need not diverge although $r_\eps\to1$.
\end{proposition}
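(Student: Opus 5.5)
The plan is to reuse the decomposition from the proof of \cref{thm:scalar-ews} without change, and to read off what each of the two degenerate hypotheses removes. Under parts (a)--(d) of \cref{ass:critical-mode} (dropping only (e)), we still have $J_\eps^k=r_\eps^k\Pi_\eps+N_\eps^k$ and
\[
\Gamma_\eps=\sum_{k\ge0}J_\eps^k\Sigma_\eps(J_\eps^*)^k
=\frac{q_\eps}{1-r_\eps^2}v_\eps v_\eps^*+R_\eps,
\qquad q_\eps=w_\eps^*\Sigma_\eps w_\eps .
\]
Here $R_\eps$ collects the two cross terms and the $N_\eps$--$N_\eps$ term. The earlier proof bounds $R_\eps$ uniformly using only (c) and (d): a geometric series in $r_\eps\gamma\le\gamma$ for the cross terms and in $\gamma^2$ for the last term. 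Neither bound uses (e), so $\sup_\eps\norm{R_\eps}<\infty$ continues to hold.

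\emph{Invisible case.} Suppose $c_\eps^*v_\eps=0$. Then
\[
\Var(y_n)=c_\eps^*\Gamma_\eps c_\eps=\frac{q_\eps\abs{c_\eps^*v_\eps}^2}{1-r_\eps^2}+c_\eps^*R_\eps c_\eps=c_\eps^*R_\eps c_\eps .
\]
This is bounded whenever $\norm{c_\eps}$ is bounded, which is the natural normalization of the observable. The divergent rank-one component is therefore annihilated by the observation.

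\emph{Unexcited case.} Suppose $q_\eps=0$. Since $\Sigma_\eps\succeq0$, the identity $w_\eps^*\Sigma_\eps w_\eps=0$ forces $\Sigma_\eps w_\eps=0$, so $\Pi_\eps\Sigma_\eps=v_\eps w_\eps^*\Sigma_\eps=0$. This kills the projector--projector term and, more importantly, the cross terms $r_\eps^k\Pi_\eps\Sigma_\eps(N_\eps^*)^k$ and their adjoints. The full covariance then reduces to $\sum_k N_\eps^k\Sigma_\eps(N_\eps^*)^k$, which is bounded by $C^2\norm{\Sigma_\eps}/(1-\gamma^2)$. Consequently $\Var(y_n)$ stays bounded for every bounded observable, even though $r_\eps\to1$.

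The main point needing care is that the statement says only ``need not diverge''. It is therefore an existence claim, and I would close with a concrete witness rather than arguing in general. Take the diagonal matrix $J_\eps=\diag(r_\eps,\gamma)$ with $r_\eps\uparrow1$, so that $v_\eps=w_\eps=e_1$ and $N_\eps=\diag(0,\gamma)$.
\begin{itemize}
\item For the invisible case, use $c=e_2$ and $\Sigma=I$. Then $\Var(y)=1/(1-\gamma^2)$, which is fixed.
\item For the unexcited case, use $c=e_1+e_2$ and $\Sigma=e_2e_2^*$. Then $\Var(y)=1/(1-\gamma^2)$ as well.
\end{itemize}
Both examples satisfy (a)--(d) and violate exactly one half of (e), which shows that each half of the observability--excitation condition in \cref{thm:scalar-ews} is necessary.
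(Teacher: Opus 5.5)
Your proposal is correct and follows the paper's own route. When $c_\eps^*v_\eps=0$ or $q_\eps=w_\eps^*\Sigma_\eps w_\eps=0$, the coefficient of the singular rank-one term in the covariance expansion vanishes, the remainder $R_\eps$ stays bounded because its bound never used the observability/excitation hypothesis, and explicit diagonal witnesses show that the scalar variance can stay bounded. You simply supply details the paper leaves implicit, namely the deduction $\Sigma_\eps w_\eps=0$ from positive semidefiniteness and the concrete choices $J_\eps=\diag(r_\eps,\gamma)$ with $c=e_2,\ \Sigma=I$ or $c=e_1+e_2,\ \Sigma=e_2e_2^*$.
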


\begin{proof}
Both statements follow directly from the coefficient of the singular term in \eqref{eq:var-asymptotic}.  Explicit diagonal examples show that the remaining variance can stay bounded.
\end{proof}

\subsection{Nonnormal variance inflation without spectral criticality}

\begin{theorem}[A stable nonnormal false positive]
\label{thm:nonnormal-counterexample}
Fix $a,b\in(-1,1)$ with $a\ne b$ and define
\[
J_\kappa=
\begin{pmatrix}
a&\kappa\\0&b
\end{pmatrix},
\qquad
\Sigma=
\begin{pmatrix}0&0\\0&1\end{pmatrix},
\qquad
y_n=e_1^*x_n.
\]
Then $\rho(J_\kappa)=\max\{|a|,|b|\}<1$ for every $\kappa$, while the stationary scalar variance is
\begin{equation}
\label{eq:nonnormal-var}
\Var(y_n)=
\frac{\kappa^2}{(a-b)^2}
\left(
\frac{1}{1-a^2}
+
\frac{1}{1-b^2}
-
\frac{2}{1-ab}
\right).
\end{equation}
Hence $\Var(y_n)\to\infty$ as $|\kappa|\to\infty$ although the spectrum remains fixed and uniformly stable.
\end{theorem}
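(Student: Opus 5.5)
The spectral claim is immediate, because $J_\kappa$ is upper triangular: its eigenvalues are $a$ and $b$ for every $\kappa$, so $\rho(J_\kappa)=\max\{|a|,|b|\}<1$. In particular $J_\kappa$ is Schur stable, and the stationary covariance exists and is given by the series $\Gamma=\sum_{k\ge0}J_\kappa^k\Sigma(J_\kappa^*)^k$ used in the proof of \cref{thm:scalar-ews}. The plan is to compute $\Var(y_n)=e_1^*\Gamma e_1$ explicitly from this series, rather than solving the $2\times2$ Lyapunov equation entrywise.

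The key step is a closed form for the powers. Since $a\ne b$, induction (or diagonalization) gives
\[
J_\kappa^k=\begin{pmatrix}a^k&\kappa\,\dfrac{a^k-b^k}{a-b}\\[4pt] 0&b^k\end{pmatrix}.
\]
Because $\Sigma=e_2e_2^*$, we get $e_1^*J_\kappa^k\Sigma(J_\kappa^*)^ke_1=\abs{e_1^*J_\kappa^ke_2}^2=\kappa^2(a^k-b^k)^2/(a-b)^2$. Note that $a,b$ are real, so no conjugates appear.

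Summing over $k\ge0$, we expand $(a^k-b^k)^2=a^{2k}+b^{2k}-2(ab)^k$ and evaluate three geometric series. Each converges because $|a|,|b|<1$ implies $|ab|<1$. This yields exactly \eqref{eq:nonnormal-var}.

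The remaining point, which I expect to be the only genuine obstacle, is showing that the bracket is strictly positive, so that the variance actually diverges as $|\kappa|\to\infty$. One route is to observe that the bracket equals $\sum_k(a^k-b^k)^2$, which is positive because the $k=1$ term is $(a-b)^2>0$. Alternatively, one can write it as $\frac{(a-b)^2(1+ab)}{(1-a^2)(1-b^2)(1-ab)}>0$. Either way, the bracket is a fixed positive constant independent of $\kappa$, so $\Var(y_n)\propto\kappa^2\to\infty$ while the spectrum stays fixed and uniformly stable.
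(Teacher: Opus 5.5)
Your proposal is correct and follows essentially the same route as the paper: the triangular closed form $(J_\kappa^k)_{12}=\kappa(a^k-b^k)/(a-b)$, the observation that only second-coordinate noise reaches $y$, and summing the squared impulse response as three geometric series. You also verify that the bracket is strictly positive (both arguments are correct, including the closed form $\frac{(a-b)^2(1+ab)}{(1-a^2)(1-b^2)(1-ab)}$), which the paper leaves implicit but which is needed for the divergence claim.
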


\begin{proof}
For $k\ge0$,
\[
(J_\kappa^k)_{12}
=\kappa\frac{a^k-b^k}{a-b}.
\]
Only noise in the second coordinate reaches $y$.  Summing the squared impulse response over $k$ gives \eqref{eq:nonnormal-var}.
\end{proof}

This counterexample separates spectral criticality from input--output amplification.  In a projected system the effective coupling $\kappa$ can change because hidden variables rotate or because the physical metric changes, even when eigenvalues remain stable.

\subsection{A physical turning point is not a spectral boundary}

\begin{proposition}[Turning-point non-equivalence]
\label{prop:turning-point}
Let $J$ be any fixed Schur-stable matrix, let $x_{n+1}=Jx_n+\xi_{n+1}$, and define a scalar observable
\[
s_n=m(\lambda_n)+c^*x_n,
\]
where $m$ is differentiable and $m'(\lambda_*)=0$.  Then the mean scalar response has a turning point at $\lambda_*$, while the spectral stability of $J$ is unchanged.  Therefore
\[
\frac{\dd}{\dd\lambda}\EE s=0
\quad\not\Rightarrow\quad
\rho(J)=1.
\]
Conversely, there are families $J_\lambda$ with $\rho(J_\lambda)\uparrow1$ for which the derivative of the observed mean is nonzero.  Hence neither implication holds without additional structural assumptions.
\end{proposition}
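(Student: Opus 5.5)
The plan is to prove the two non-implications separately, each by an explicit family, using only the linearity of the stochastic recursion and the stationary-covariance formula from the proof of \cref{thm:scalar-ews}.

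\emph{First direction (turning point without spectral boundary).} I would interpret $\lambda_n$ as a slowly varying parameter entering only through $m$. Since $J$ is fixed and Schur stable and $\EE\xi_n=0$, taking expectations gives $\EE x_{n+1}=J\EE x_n$, so $\EE x_n=J^n\EE x_0\to0$. In the stationary regime, or for the stationary solution $x_n=\sum_{k\ge0}J^k\xi_{n-k}$, we have $\EE x_n=0$. Hence $\EE s_n=m(\lambda_n)$, and viewed as a function of the parameter, $\frac{\dd}{\dd\lambda}\EE s=m'(\lambda)$, which vanishes at $\lambda_*$. Meanwhile $\rho(J)<1$ is independent of $\lambda$. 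A concrete instance such as $m(\lambda)=-(\lambda-\lambda_*)^2$ and $J=\tfrac12\II$ exhibits the claim. I would also note that the variance, $c^*\Gamma c$ with $\Gamma=\sum_kJ^k\Sigma(J^*)^k$, is also $\lambda$-independent, so no statistic signals criticality.

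\emph{Converse direction (spectral boundary without turning point).} Here I need a family $J_\lambda$ with $\rho(J_\lambda)\uparrow1$ whose observed mean has nonzero derivative. The simplest choice is the scalar family $J_\lambda=\lambda\in(0,1)$ with a constant input $x_{n+1}=\lambda x_n+\beta+\xi_{n+1}$, $\beta\ne0$, and $s_n=x_n$. The stationary mean is $\beta/(1-\lambda)$, whose derivative $\beta/(1-\lambda)^2$ is nonzero on $(0,1)$ while $\rho\uparrow1$. Alternatively, to stay within the statement's form $s=m(\lambda)+c^*x$, I would take $m(\lambda)=\lambda$ and any $J_\lambda$ with zero-mean noise, so that $\frac{\dd}{\dd\lambda}\EE s=1\ne0$. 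I would present the second version because it fits the displayed template exactly, and mention the first as a more physical example in which the mean itself is dynamically generated.

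\emph{Main obstacle.} The only real subtlety is interpretive rather than technical: making precise what ``the mean scalar response as a function of $\lambda$'' means when $\lambda_n$ drifts. I would resolve this by working with the frozen-parameter stationary mean, in line with the quasi-static interpretation of the slow-forcing subsection, and remarking that under slow drift the transient correction $J^n\EE x_0$ decays geometrically, so the conclusion persists. Both directions together show that neither implication holds without additional structural assumptions linking $m$ to the spectrum of $J_\lambda$.
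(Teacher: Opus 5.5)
Your proposal is correct and follows the paper's proof essentially verbatim. The forward direction uses $\EE x_n=0$ under stationarity, so that $\EE s_n=m(\lambda_n)$; the paper's converse is exactly your template version, with $J_\lambda=\diag(\lambda,a)$, $\abs{a}<1$, $m(\lambda)=\lambda$ and $c=e_2$. The choice $c=e_2$ also keeps the observed coordinate uniformly stable, so the scalar does not see the critical mode at all, but the statement does not require this.
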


\begin{proof}
For the fixed stable matrix, stationarity gives $\EE x_n=0$ after centering, hence $\EE s_n=m(\lambda_n)$ and the first implication fails at any critical point of $m$.  Conversely, let $J_\lambda=\diag(\lambda,a)$ with $|a|<1$ and $\lambda\uparrow1$, and take $m(\lambda)=\lambda$ with $c=e_2$.  Then the full system approaches a spectral boundary while $\dd\EE s/\dd\lambda=1$ and the observed coordinate remains uniformly stable.  Thus the reverse implication also fails.
\end{proof}

A physics-based balance indicator may be extremely useful precisely because it is tied to a conservation or feedback mechanism.  That does not make its extremum a universal spectral threshold of every projected local operator.

\subsection{Finite-horizon memory is non-identifying}

\begin{proposition}[Arbitrarily delayed hidden return]
\label{prop:delayed-return}
For every integer $H\ge1$ and scalar $\beta\ne0$, there exists a nilpotent, hence Schur-stable, hidden realization $(B,C,D)$ such that
\[
BD^{\ell-1}C=0,
\qquad 1\le\ell\le H,
\]
but
\[
BD^HC=\beta.
\]
Thus two reduced systems can have identical memory coefficients through lag $H$ and differ at lag $H+1$.
\end{proposition}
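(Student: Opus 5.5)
The plan is an explicit shift-register (delay-line) construction. I will take the hidden space $\HQ=\RR^{H+1}$ with standard basis $e_1,\dots,e_{H+1}$, and let $D$ be the lower shift $De_i=e_{i+1}$ for $i\le H$, with $De_{H+1}=0$. Then $D^{H+1}=0$, so $D$ is nilpotent, $\spec(D)=\{0\}$ and $\rho(D)=0<1$; this gives Schur stability of the hidden block. To make the realization act on a scalar resolved coordinate, I will choose $C=e_1$ (injection into the first register slot) and $B=\beta e_{H+1}^*$ (readout from the last slot). If a general $\HP$ is wanted, I tensor with a fixed unit vector, or simply take $\HP=\RR$ as the statement's scalar $\beta$ suggests.

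The key step is to evaluate the kernel coefficients \eqref{eq:discrete-kernel}, $K_\ell=BD^{\ell-1}C$. Since $D^{\ell-1}e_1=e_\ell$ for $1\le\ell\le H+1$, we get $K_\ell=\beta\,e_{H+1}^*e_\ell$. This is $0$ for $\ell\le H$ and equals $\beta$ for $\ell=H+1$, that is, $BD^HC=\beta$. For $\ell\ge H+2$, $K_\ell=0$ by nilpotency. The only care needed is the index bookkeeping: the lag-$\ell$ coefficient involves $D^{\ell-1}$, so the register needs exactly $H+1$ slots to place the first nonzero return at lag $H+1$.

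For the concluding sentence, I compare two reduced systems with the same $A$. The first is built from the above realization $(B,C,D)$. The second is the trivial realization with $B=0$ (or any realization with the same $D$ and $\beta$ replaced by $0$), whose kernel vanishes identically. By \cref{thm:discrete-elimination}, both give exact Volterra equations \eqref{eq:discrete-volterra}. Their coefficients agree for $1\le\ell\le H$ and differ at $\ell=H+1$ by $\beta\ne0$. For the difference to appear also in the full stability picture, I will check that the full matrix $J$ can be kept Schur stable. By \cref{thm:schur-factor}, $\Phi(z)=z-A-\beta z^{-(H+1)}$. Taking $|A|$ and $|\beta|$ small, for instance $|A|+|\beta|<1$, ensures $\Phi(z)\ne0$ for $|z|\ge1$, so \cref{cor:schur-stability} applies. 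This step is optional, since the statement only requires $D$ to be Schur stable.

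I expect no real obstacle. The only point to get right is the off-by-one between the lag index $\ell$ and the power $D^{\ell-1}$, together with making sure the chosen dimension $H+1$ gives nilpotency while still leaving $D^H\ne0$.
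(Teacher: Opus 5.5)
Your construction is correct and is exactly the paper's proof: an $(H+1)$-slot delay line with $C=e_1$, the shift $De_j=e_{j+1}$, $De_{H+1}=0$, and readout $B=\beta e_{H+1}^*$, using $D^{\ell-1}e_1=e_\ell$. Your optional check that the full $J$ can also be kept Schur stable goes beyond the paper and is valid: the argument uses $\Phi(z)=z-A-\beta z^{-(H+1)}$ with $|A|+|\beta|<1$.
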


\begin{proof}
Let $\HQ=\RR^{H+1}$ with basis $e_1,\dots,e_{H+1}$.  Set $C(1)=e_1$, $De_j=e_{j+1}$ for $j\le H$, $De_{H+1}=0$, and $Bz=\beta z_{H+1}$.  Then $D^{\ell-1}e_1=e_\ell$, which is annihilated by $B$ for $\ell\le H$, while $BD^He_1=\beta$.
\end{proof}

No finite stack of lags identifies the full hidden memory without a model class or truncation assumption.  This is an operator version of the broader principle that a projected record need not identify the mechanism producing its temporal organization.

\begin{table}[t]
\centering
\caption{Distinct mechanisms that can produce or suppress scalar warning statistics.}
\label{tab:mechanisms}
\begin{tabularx}{\textwidth}{>{\RaggedRight\arraybackslash}p{0.22\textwidth}>{\RaggedRight\arraybackslash}p{0.29\textwidth}>{\RaggedRight\arraybackslash}X}
\toprule
Mechanism & Mathematical object & Scalar consequence \\
\midrule
Real spectral criticality & simple $r\uparrow1$ with observable, excited mode & variance diverges; AC(1)$\to1$ for a real positive mode \\
Oscillatory crossing & pair $re^{\pm i\omega}$ & variance may diverge; AC(1)$\to\cos\omega$ \\
Hidden critical mode & $c^*v=0$ & no leading scalar divergence \\
Unexcited critical mode & $w^*\Sigma w=0$ & no leading stochastic divergence \\
Nonnormal amplification & large $\norm{J^h}$ or $\norm{PJ^hQ}$ at fixed spectrum & variance and finite-time response can increase without a bifurcation \\
Physical balance turning & $\dd \EE s/\dd\lambda=0$ & useful mechanistic threshold, not generally $\rho(J)=1$ \\
Projected memory & $BD^{k}C$, $PJ^hP-A^h$ & history dependence and scalar persistence need not be one-to-one \\
\bottomrule
\end{tabularx}
\end{table}

\section{Causal local inference and audit principles}
\label{sec:inference}

For a slowly evolving system, let $z_t\in\RR^d$ denote frozen coordinates and $F_t$ a prescribed forcing.  A local trailing-window model is
\begin{equation}
\label{eq:local-model}
z_{t+1}-z_t=L_tz_t+c_tF_t+b_t+\eps_{t+1},
\qquad M_t=\II+L_t.
\end{equation}
Only past observations are used.  The ridge parameter is selected in a disjoint control period, and the representation, windows, horizons, and branch comparisons are frozen before evaluation.

Several audit principles follow from the theory.
\begin{enumerate}[label=(\arabic*),leftmargin=0.8cm]
\item \emph{Separate modal and nonmodal claims.}  Report $\rho(M_t)$ and finite-horizon gains independently.
\item \emph{Specify the metric.}  Repeat nonmodal conclusions under physically meaningful and standardized metrics.
\item \emph{Condition on stability when comparing stable response geometry.}  A gain comparison should not be driven solely by one branch crossing the unit circle.
\item \emph{Test nested observed spaces.}  A scalar may detect susceptibility while a two- or three-dimensional space is needed for closure.
\item \emph{Stress nonstationarity explicitly.}  Affine forcing and an intercept already span linear time on a ramp; quadratic time is a nontrivial additional stress.
\item \emph{Separate description from warning skill.}  Co-movement before a transition is not a validated EWS without chronological out-of-sample testing or independent trajectories.
\end{enumerate}

\section{AMOC hysteresis: a mechanistic application}
\label{sec:amoc}

\subsection{Experiment and frozen representation}

We use the $4400$-year Community Earth System Model freshwater-hosing experiment introduced by \citet{vanWesten2024}.  North Atlantic freshwater forcing increases linearly from $0$ to $0.66$ Sv over years $1$--$2200$ and then decreases symmetrically over years $2201$--$4400$.  The AMOC collapses near model year $1758$.  In the extended descending branch used here, physical recovery begins near year $3984$ and the fastest recovery occurs approximately during years $4114$--$4143$.

The full local state is a six-dimensional smooth representation of the AMOC overturning profile at $26^\circ$N between $500$ and $4500$ m.  A frozen control-period EOF analysis defines an ordered resolved space.  The first three EOFs explain $88.73\%$, $9.33\%$, and $1.38\%$ of control variance, respectively, for a cumulative $99.446\%$.  The three directions are well separated in variance and almost entirely contained in the smooth six-dimensional space (principal cosines $0.99996$, $0.99970$, and $0.99744$).

Local operators are estimated every ten years with trailing windows $W\in\{100,150,200\}$ years.  The primary window is $W=100$ and the primary response horizon is $H=50$ years.  Comparisons pair ascending and descending windows at identical freshwater forcing.  Ridge parameters are selected only in control.  The primary branch contrasts use the common cohort for which both compared operators are Schur stable.

\subsection{Physical meaning of the EOF coordinates}

\begin{figure}[t]
\centering
\begin{subfigure}[t]{0.49\textwidth}
\centering
\includegraphics[width=\linewidth]{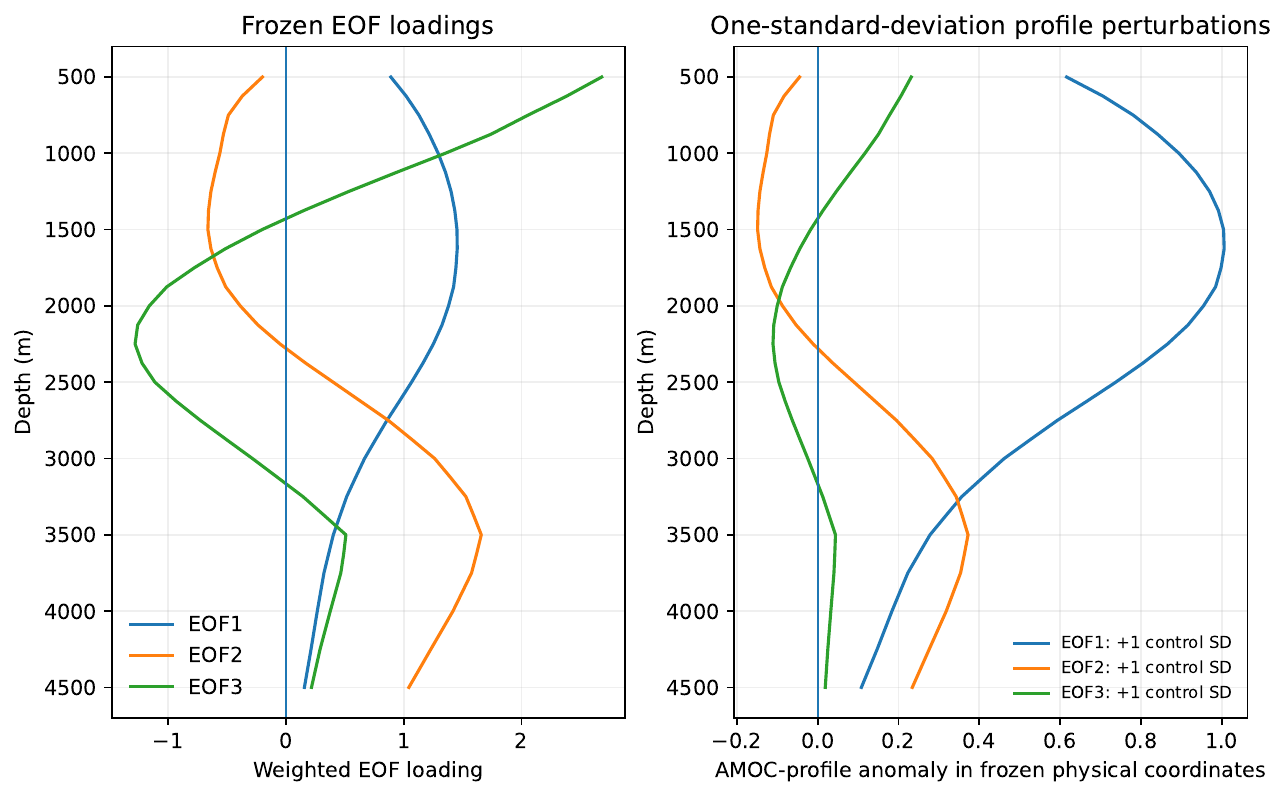}
\caption{Frozen weighted loadings and one-control-standard-deviation profile perturbations.}
\end{subfigure}\hfill
\begin{subfigure}[t]{0.49\textwidth}
\centering
\includegraphics[width=\linewidth]{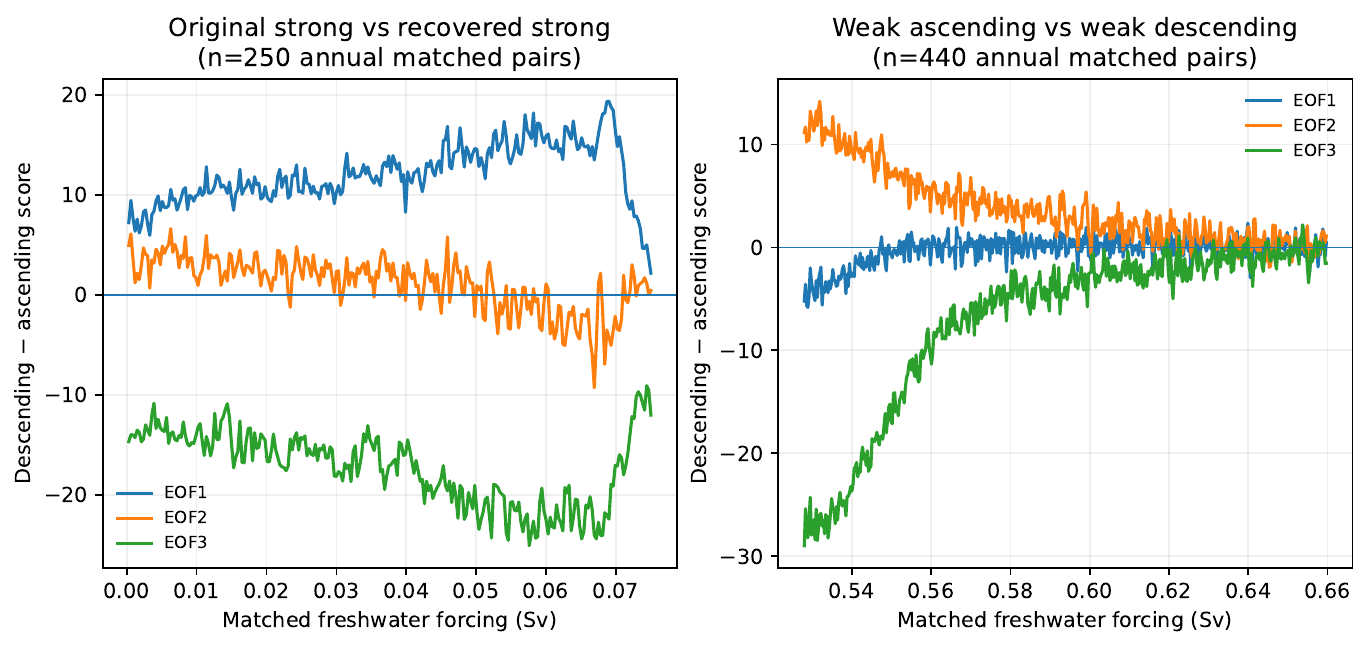}
\caption{Descending-minus-ascending scores at identical forcing.}
\end{subfigure}
\caption{Physical interpretation of the resolved coordinates.  EOF1 is an amplitude/transport mode, EOF2 a vertical-displacement and upper--deep redistribution mode, and EOF3 a shape/curvature mode.  In the weak regime, EOF1 is nearly identical across branches while EOF2 and EOF3 differ, providing a concrete failure of scalar state equivalence.}
\label{fig:eof-physics}
\end{figure}

EOF1 is an amplitude mode: its score has correlation $0.994$ with the upper-cell maximum after forcing adjustment.  EOF2 is dipolar and is most strongly associated with the depth of the upper-cell maximum (adjusted correlation $-0.888$).  EOF3 is tripolar, has its closest geometric match to a curvature template, and records thickness and upper--intermediate--deep compensation.

At matched forcing, the median recovered-strong minus original-strong score differences are
\[
\Delta a_1=11.89,
\qquad
\Delta a_2=1.72,
\qquad
\Delta a_3=-16.72
\]
in frozen control-standard-deviation units.  More sharply, for weak ascending versus weak descending states,
\[
\Delta a_1=-0.061,
\qquad
\Delta a_2=2.93,
\qquad
\Delta a_3=-3.58.
\]
Thus two weak states can have essentially the same scalar amplitude coordinate and different vertical geometry.

\subsection{Nested observed spaces}

We repartition the same frozen six-dimensional operators without refitting, using
\[
P_1=\operatorname{span}\{\mathrm{EOF1}\},
\quad
P_2=\operatorname{span}\{\mathrm{EOF1},\mathrm{EOF2}\},
\quad
P_3=\operatorname{span}\{\mathrm{EOF1},\mathrm{EOF2},\mathrm{EOF3}\}.
\]
Table~\ref{tab:nested} reports the recovered-strong minus original-strong median differences for the primary $W=100$, $H=50$ common-stable cohort ($n=13$).

\begin{table}[t]
\centering
\caption{Nested observed-space contrasts after transport recovery.  ``Physical'' is the depth-weighted metric and ``standardized'' the control-standardized Euclidean metric.  The same full operator is used at all ranks.}
\label{tab:nested}
\begin{tabular}{llrrr}
\toprule
Metric & Diagnostic & EOF1 & EOF1--2 & EOF1--3 \\
\midrule
Physical & hidden$\to$observed gain & 4.748 & 5.308 & 5.263 \\
Physical & projected-memory correction & $-0.006$ & 0.460 & 0.525 \\
Physical & static Schur return & $-0.150$ & 0.430 & 1.039 \\
\addlinespace
Standardized & hidden$\to$observed gain & 1.364 & 1.402 & 1.386 \\
Standardized & projected-memory correction & 0.055 & 0.379 & 0.406 \\
Standardized & static Schur return & $-0.051$ & 0.252 & 0.532 \\
\bottomrule
\end{tabular}
\end{table}

The hidden-to-observed gain contrast is positive for all ranks and all $13$ pairs.  EOF1 alone therefore detects altered susceptibility.  It does not robustly identify the changed feedback geometry: the projected-memory and static Schur-return contrasts become consistently positive only after EOF2 is included.  EOF1--EOF2 is the smallest physically interpretable space that resolves both amplitude and vertical displacement.  EOF3 enlarges the observed expression of memory and separates hysteresis branches, but its direct channel into EOF1--EOF2 is not robustly amplified under both metrics.  The dominant metric-robust pathway is residual vertical structure into EOF1 and EOF2.

\subsection{Recovered transport is not recovered operator geometry}

Under the four primary contracts---baseline and quadratic-time stress, each under physical and standardized metrics---the recovered strong branch has larger total gain, modal excess, hidden-to-observed gain, and projected-memory correction.  For $W=100$, $H=50$, and the $13$ common-stable pairs, the median recovered-minus-original differences are shown in \cref{tab:operator-contrast}.

\begin{table}[t]
\centering
\caption{Recovered strong minus original strong operator contrasts.  Every row has positive direction in all $13$ matched pairs under each contract.}
\label{tab:operator-contrast}
\begin{tabular}{lrrrr}
\toprule
Diagnostic & Base/physical & Base/std. & Quad./physical & Quad./std. \\
\midrule
Absolute modal excess & 5.189 & 1.134 & 4.782 & 1.254 \\
Full maximum gain & 5.194 & 1.167 & 4.831 & 1.181 \\
Hidden-to-observed gain & 5.263 & 1.386 & 4.858 & 1.371 \\
Projected-memory correction & 0.525 & 0.406 & 1.021 & 0.767 \\
\bottomrule
\end{tabular}
\end{table}

The magnitude of gain is metric dependent, as required by \cref{thm:metric}; the sign of the recovered-branch contrast is not.  The result is therefore not that the recovered state is spectrally unstable.  It is that similar transport can coexist with a different stable input--output and memory geometry.

\subsection{The \texorpdfstring{$\FovS$}{FovS} turning point and the operator}

The AMOC-induced freshwater transport at the southern Atlantic boundary is a physics-based scalar tied to the salt-advection feedback \citep{Hawkins2011,Jackson2013,vanWesten2024}.  The local archive contains $\FovS$ for the ascending branch only.  An ensemble of natural cubic splines built from all offsets of $50$-year means reproduces the published minimum exactly:
\[
t_{\min}^{\FovS}=1732,
\qquad
\FovS(t_{\min})=-0.134896\ \mathrm{Sv},
\]
which is $26$ years before the AMOC collapse at year $1758$.

\begin{figure}[t]
\centering
\begin{subfigure}[t]{0.49\textwidth}
\centering
\includegraphics[width=\linewidth]{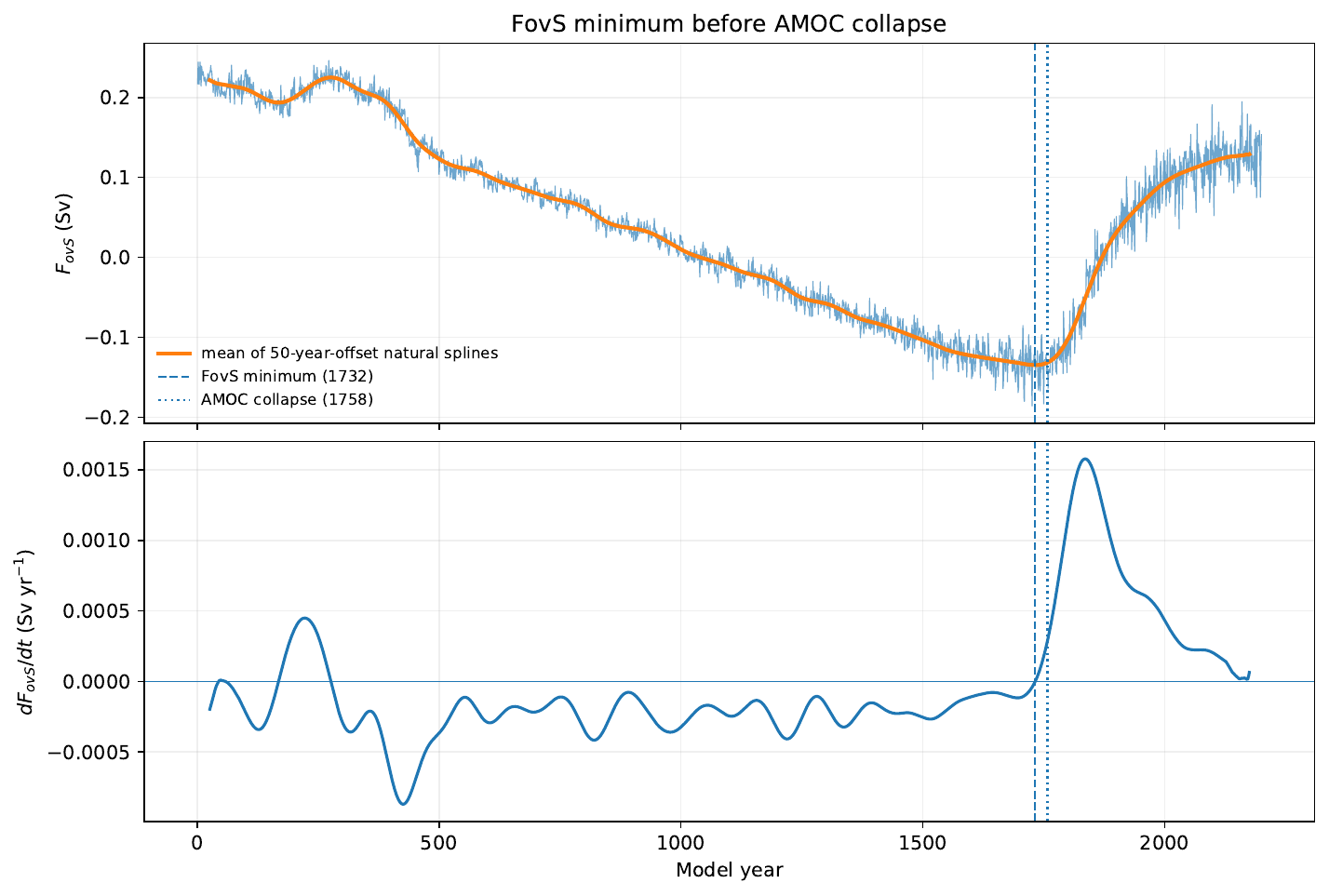}
\caption{$\FovS$ and the derivative of the spline ensemble.}
\end{subfigure}\hfill
\begin{subfigure}[t]{0.49\textwidth}
\centering
\includegraphics[width=\linewidth]{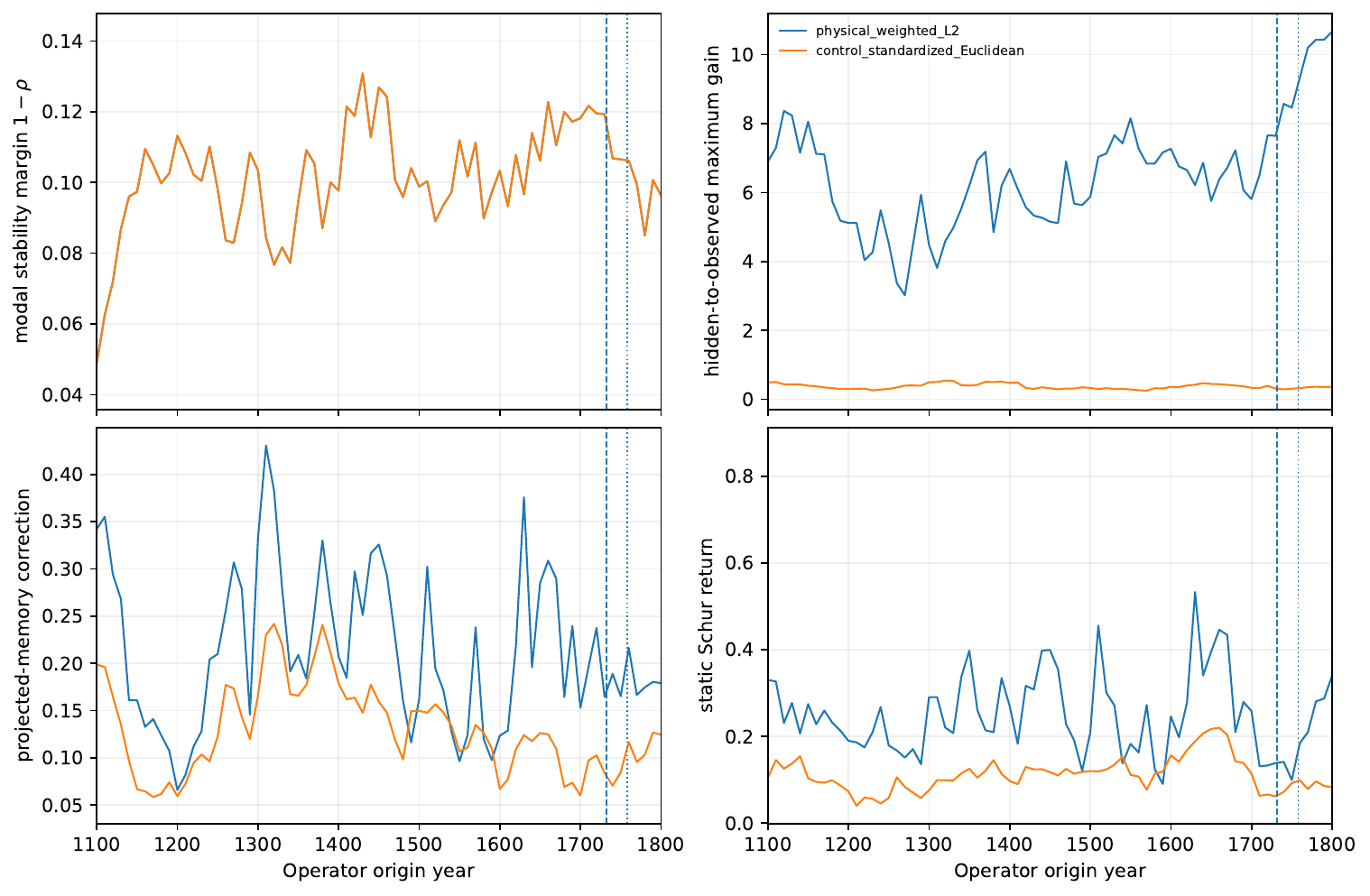}
\caption{Local operator diagnostics before and through collapse.}
\end{subfigure}
\caption{A physics-based scalar turning point and projected operator geometry need not share a threshold.  The $\FovS$ minimum occurs in year $1732$, while the modal stability margin does not decrease consistently toward zero.  Hidden-to-observed gain rises strongly around the collapse under the physical metric, but its pre-minimum trend is not metric invariant.}
\label{fig:fovs-operator}
\end{figure}

Five shifted $300$-year pre-collapse intervals were used to test interval sensitivity.  The variance of $\FovS$ increased in $5/5$ intervals, whereas its AC(1) increased in $3/5$.  For EOF1, variance and AC(1) increased in $3/5$ intervals each.  This reproduces the qualitative conclusion that the variance of the physics-based indicator is more interval robust than classical scalar AMOC diagnostics \citep{vanWesten2024}.

The operator diagnostics do not yield a universal pre-collapse EWS.  Warning-direction consistency across the same five intervals is summarized in \cref{tab:ews-consistency}.

\begin{table}[t]
\centering
\caption{Fraction of shifted pre-collapse intervals with the prespecified warning direction, $W=100$.  For the modal margin the warning direction is decreasing; for all other diagnostics it is increasing.}
\label{tab:ews-consistency}
\begin{tabular}{lcc}
\toprule
Diagnostic & Physical metric & Standardized metric \\
\midrule
Modal stability margin $1-\rho$ & 1/5 & 1/5 \\
Hidden-to-observed maximum gain & 4/5 & 4/5 \\
Projected-memory correction & 3/5 & 2/5 \\
Static Schur return & 3/5 & 4/5 \\
\bottomrule
\end{tabular}
\end{table}

Over years $1500$--$1731$, the Spearman correlation of the modal stability margin with $\FovS$ is $-0.733$, but the margin has a positive time slope and reaches its local minimum near year $1520$.  Hence the spectrum moves, on average, farther from the unit circle while $\FovS$ approaches its minimum.  Hidden-to-observed gain has opposite pre-minimum trends under the physical and standardized metrics; projected memory also reverses sign across metrics; static Schur return has near-zero rank correlation with the $\FovS$ level.  These results empirically instantiate \cref{prop:turning-point}:
\[
\frac{\dd\FovS}{\dd t}=0
\quad\not\Longleftrightarrow\quad
\rho(M_t)=1
\quad\not\Longleftrightarrow\quad
\det\Phi_t(1)=0.
\]

The correct interpretation is complementary.  $\FovS$ diagnoses the evolution of a salt-advection balance.  The Schur--Volterra operator diagnoses vertical input--output geometry, projected memory, and transient receptivity.  Some nonmodal quantities show warning-directed tendencies, but they are interval- or metric-dependent and are therefore treated as mechanistic diagnostics rather than validated EWS.

\subsection{Predictive scope}

A strict-causal decoder reconstructs the hidden coordinates from the previous $20$ years of resolved history and prescribed forcing.  At five-year horizon during recovery, the $W=100$ decoded model improves on a prespecified finite-lag-20 benchmark with skill $0.252$ and a descriptive block interval $[0.103,0.541]$; the full-state oracle skill is $0.308$.  The decoder captures $81.7\%$ of the oracle advantage.  The advantage survives deletion of any individual origin and any contiguous block of four origins, but does not robustly beat persistence.  This is consistent with \cref{thm:decoder-oracle}: hidden reconstruction is useful because $PJ^hQ$ is large, yet the observable remains highly persistent.

\subsection{Limitations of the application}

The local windows overlap and arise from a single forced trajectory.  Bootstrap intervals are dependence-aware descriptive summaries, not independent-replicate confidence intervals.  The descending branch lacks $\FovS$, so no statement is made about recovery of the salt-advection feedback.  The application supports the distinction
\[
\text{recovered transport}
\ne
\text{recovered operator geometry},
\]
not an observational collapse date or a universal climate tipping criterion.

\section{Discussion}
\label{sec:discussion}

\subsection{Detectability is not identifiability}

The scalar EWS theorem establishes detectability: under explicit conditions, the scalar inherits a singular covariance from a critical mode.  It does not establish identification of the full operator.  Different hidden realizations can generate the same transfer function, and different long kernels can agree over any finite horizon.  A scalar can therefore indicate increasing susceptibility while remaining insufficient to determine whether the mechanism is spectral criticality, nonnormal transfer, or hidden memory.

The AMOC nested-space audit makes this distinction concrete.  EOF1 detects increased hidden-to-observed susceptibility after recovery.  Only after EOF2 is included do the static Schur return and projected-memory contrast become robust.  In physical terms, intensity is detectable from a scalar, but resilience geometry requires intensity plus vertical displacement.

\subsection{Three memories and two kinds of warning}

Realized memory, structural return, and finite-horizon memory answer different questions.  A pathwise memory share can be high because recent events activate a fixed kernel.  A structural Schur return can be large even on a quiet trajectory.  A projected-memory correction can be large at one horizon and small at another because hidden excursions interfere.  None is automatically an EWS.

Likewise, a physics-based warning and a critical-slowing-down warning need not coincide.  The former may arise from a balance law or feedback reversal; the latter from a multiplier approaching a stability boundary.  Equivalence requires additional assumptions linking the scalar balance to the critical eigendirection and the Schur complement.  The $\FovS$ minimum provides a particularly clear example of a useful scalar threshold that is not the same event as a local spectral crossing of the vertical operator.

\subsection{Nonnormality and metric choice}

Nonnormality complicates both positive and negative conclusions.  Rising variance can occur at fixed spectrum, as \cref{thm:nonnormal-counterexample} proves.  Conversely, a physically meaningful gain can appear large in one metric and modest in standardized coordinates.  This is not a contradiction: the metric specifies which perturbations and responses are considered equally large.  A robust empirical claim should therefore distinguish:
\[
\text{spectral invariants},
\qquad
\text{metric-dependent gains},
\qquad
\text{metric-robust directions of contrast}.
\]

\subsection{Relation to fractional and event-time memory}

Fractional kernels belong naturally to the lifting framework through completely monotone relaxation spectra, but they are one model class among many.  Positive scalar event-time memory, including Hawkes-type excitation, is another special case.  The common principle is that statistics computed after projection need not identify the organization of the hidden or event-time dynamics.  The present theory extends that principle to signed, multivariate, nonnormal, and operator-valued memory.

\subsection{What would constitute a validated operatorial EWS?}

A stronger claim would require more than the present single-trajectory alignment.  At minimum one would need prespecified diagnostics, multiple independent or perturbed transition trajectories, chronological evaluation of lead time and false alarms, metric and representation stability, and comparison with physics-based and parsimonious scalar baselines.  The theory in \cref{sec:ews-theory} supplies the conditions to test; the current AMOC application supplies a mechanistic example, not the final validation.

\section{Conclusion}

Projection converts hidden dynamics into Volterra memory.  A Markovian lift realizes a selected memory kernel but does not invert the projection.  The Schur resolvent determines how hidden relaxation renormalizes stability, while $PJ^hQ$ and $PJ^hP-A^h$ quantify finite-horizon hidden transfer and projected memory.  Scalar critical slowing down survives this reduction only when the critical mode is spectrally isolated, well conditioned, observed, excited, and sampled under quasi-stationary forcing.

The failure of any one condition changes the interpretation.  A scalar can miss a true instability, display increasing variance far from instability, or exhibit a physically meaningful turning point unrelated to a spectral boundary.  In the AMOC experiment, $\FovS$ anticipates collapse through a salt-advection balance, whereas the vertical operator diagnoses altered memory and receptivity, especially after transport recovery.  Their complementarity is the result: a scalar warning and an operatorial resilience diagnostic need not be redundant, and neither can be substituted for the other without additional structure.

\appendix

\section{Stationary covariance details}

\begin{lemma}[Uniform remainder in \cref{thm:scalar-ews}]
Under \cref{ass:critical-mode}, the remainder $R_\eps$ in \eqref{eq:cov-asymptotic} satisfies
\[
\norm{R_\eps}
\le
\frac{2C_1}{1-\gamma}+
\frac{C_2}{1-\gamma^2}
\]
for constants independent of $\eps$.
\end{lemma}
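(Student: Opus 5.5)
The plan is to expand the stationary covariance series and split it according to the decomposition in \cref{ass:critical-mode}(c). Because $\Pi_\eps N_\eps=N_\eps\Pi_\eps=0$ and $\Pi_\eps^2=\Pi_\eps$, induction gives
\[
J_\eps^k=r_\eps^k\Pi_\eps+N_\eps^k
\]
for $k\ge1$. At $k=0$ we have $J_\eps^0=\II=\Pi_\eps+(\II-\Pi_\eps)$, so the convention $N_\eps^0:=\II-\Pi_\eps$ makes the identity hold for all $k\ge0$. Under this convention the bound $\norm{N_\eps^k}\le C\gamma^k$ at $k=0$ still holds, since $\norm{\II-\Pi_\eps}\le 1+\norm{v_\eps}\norm{w_\eps}$ is uniformly bounded by (d); enlarging $C$ if needed covers it.

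Substituting into
\[
\Gamma_\eps=\sum_{k\ge0}J_\eps^k\Sigma_\eps(J_\eps^*)^k
\]
produces four groups of terms. The first is the projector--projector term, which is summed exactly as in the proof of \cref{thm:scalar-ews}. It uses $\Pi_\eps\Sigma_\eps\Pi_\eps^*=(w_\eps^*\Sigma_\eps w_\eps)\,v_\eps v_\eps^*$ and equals the leading term, so it is not part of $R_\eps$. The second and third are the two cross terms,
\[
\sum_k r_\eps^k\Pi_\eps\Sigma_\eps(N_\eps^*)^k
\qquad\text{and its adjoint.}
\]
The fourth is the term $\sum_k N_\eps^k\Sigma_\eps(N_\eps^*)^k$. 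Thus $R_\eps$ is the sum of the two cross terms and the $N$--$N$ term.

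Next I bound each piece in operator norm using submultiplicativity. Each cross term is at most
\[
\sum_k r_\eps^k\norm{\Pi_\eps}\norm{\Sigma_\eps}C\gamma^k
\le \norm{v_\eps}\norm{w_\eps}\norm{\Sigma_\eps}C\sum_k\gamma^k ,
\]
using $0<r_\eps<1$ and $\norm{\Pi_\eps}=\norm{v_\eps}\norm{w_\eps}$ for a rank-one operator. Setting $C_1:=C\sup_\eps\norm{v_\eps}\norm{w_\eps}\norm{\Sigma_\eps}$, the two cross terms together contribute at most $2C_1/(1-\gamma)$. The $N$--$N$ term is at most $\sum_k C^2\gamma^{2k}\norm{\Sigma_\eps}$, which gives $C_2/(1-\gamma^2)$ with $C_2:=C^2\sup_\eps\norm{\Sigma_\eps}$. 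Both constants are finite and independent of $\eps$ by (c) and (d). The triangle inequality then yields the stated bound.

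The one delicate step is the $k=0$ term, or more generally making the splitting $J_\eps^k=r_\eps^k\Pi_\eps+N_\eps^k$ consistent. I will handle it either through the convention $N_\eps^0=\II-\Pi_\eps$ or by pulling out $\Sigma_\eps$ separately and absorbing its bounded norm into $C_2$. I also need the series to converge absolutely so that the rearrangement into four sums is legitimate. This follows from the same geometric bounds, since $r_\eps<1$ for each fixed $\eps$.
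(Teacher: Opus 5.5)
Your proposal is correct and follows essentially the same route as the paper. Both split $\Gamma_\eps$ into the projector--projector term, two cross terms, and the $N$--$N$ term, then bound the remainder by geometric series in $r_\eps\gamma\le\gamma$ and $\gamma^2$ using the uniform bounds on $v_\eps$, $w_\eps$, $\Sigma_\eps$. Your convention $N_\eps^0:=\II-\Pi_\eps$ is a small improvement: under the usual $N_\eps^0=\II$, the identity $J_\eps^k=r_\eps^k\Pi_\eps+N_\eps^k$ that the paper uses without comment fails at $k=0$, and your convention repairs this.
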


\begin{proof}
Expand the covariance series into projector--projector, two projector--$N$ cross terms, and the $N$--$N$ term.  Uniform boundedness of $v_\eps,w_\eps,\Sigma_\eps$ bounds the cross terms by $C_1\sum_{k\ge0}(r_\eps\gamma)^k\le C_1/(1-\gamma)$ and the final term by $C_2\sum_{k\ge0}\gamma^{2k}$.
\end{proof}

\section{A continuous-time analogue}

Let $\dd x=A_\eps x\dd t+G_\eps\dd W_t$, with a simple real eigenvalue $-\alpha_\eps\uparrow0$ and a uniformly stable complement.  Under the continuous counterparts of \cref{ass:critical-mode}, the stationary covariance satisfies
\[
\Gamma_\eps
=
\frac{q_\eps}{2\alpha_\eps}v_\eps v_\eps^*+O(1),
\]
and an observed scalar variance diverges like $\alpha_\eps^{-1}$.  The normalized autocovariance at fixed lag $\tau$ tends $e^{-\alpha_\eps\tau}\to1$.  The proof is identical using
\[
\Gamma_\eps=\int_0^\infty e^{tA_\eps}G_\eps G_\eps^*e^{tA_\eps^*}\dd t.
\]

\section{Reproducibility and data scope}

The AMOC numbers reported in \cref{sec:amoc} come from a frozen audit pipeline.  The representation, control basis, ridge parameters, local windows, operator horizon, metric contracts, stability cohort, and comparison regimes were fixed before the interpretive audits.  The EOF physical atlas used no model fitting.  The nested observed-space audit repartitioned the same fitted six-dimensional operators.  The $\FovS$ audit used only the ascending branch because all $2200$ descending entries in the local contract are missing; no values were imputed or reconstructed.  The spline minimum and the operator/EWS alignment were then computed from frozen outputs.

\section*{Declarations}

\paragraph{Funding.}
No funds, grants, or other support were received specifically for this work.

\paragraph{Competing interests.}
The author has no relevant financial or non-financial interests to disclose.

\paragraph{Author contributions.}
Mauricio Herrera-Mar\'in conceived the study, developed the mathematical theory, designed and implemented the computational audits, analyzed the results, and wrote and revised the manuscript.

\paragraph{Data and code availability.}
The mathematical results are self-contained.  The frozen scripts, derived tables, audit records, and figure-generating outputs supporting the AMOC application are archived in the versioned AMOC--MZ reproducibility repository on Zenodo, DOI: \href{https://doi.org/10.5281/zenodo.21606982}{10.5281/zenodo.21606982}.  The underlying CESM experiment and secondary physical diagnostics are described and cited in \cref{sec:amoc}.  No descending-branch values of $\FovS$ were imputed or reconstructed.

\paragraph{Ethics approval and consent.}
Not applicable.

\printbibliography[title={References}]

\end{document}